\newtheorem{theorem}{Theorem}[] 
\newtheorem{lemma}{Lemma}[]
\DeclareMathOperator {\diag}{\text{diag}}
\title{Finding Large Independent Sets in Networks Using Competitive Dynamics}
\author[1,2,**]{N.M.~Mooij}
\author[1,2,*]{I.~Kryven}
\affil[1]{Mathematical Institute, Utrecht University, Utrecht, Netherlands}
\affil[2]{Centre for Complex Systems Studies, Utrecht University, Utrecht, Netherlands}
\affil[*]{i.kryven@uu.nl}
\affil[**]{m.n.mooij@uu.nl}
\date{}
\begin{document}

\maketitle

\section{Abstract}
Many decision-making algorithms draw inspiration from the inner workings of individual biological systems. However, it remains unclear whether collective behavior among biological species can also lead to solutions for computational tasks. By studying the coexistence of species that interact through simple rules on a network, we demonstrate that the underlying dynamical system can recover near-optimal solutions to the maximum independent set problem -- a fundamental, computationally hard problem in graph theory. Furthermore, we observe that the optimality of these solutions is improved when the competitive pressure in the system is gradually increased. We explain this phenomenon by showing that the cascade of bifurcation points, which occurs with rising competitive pressure in our dynamical system, naturally gives rise to Katz centrality-based node removal in the network. By formalizing this connection, we propose a biologically inspired discrete algorithm for approximating the maximum independent set problem on a graph. Our results indicate that complex systems may collectively possess the capacity to perform non-trivial computations, with implications spanning biology, economics, and other fields.
\\ \\
{\noindent \small \bf Keywords: Complex Networks, Competing species, Maximum Independent Set, Network Dynamics and Resilience}

\section{Introduction}
Dynamical systems have long been considered as a means of performing analog computations.
Processing information by an artificial neural network is equivalent to a dynamical system becoming stationary at a stable state~\cite{hopfield1982, rutishauser2018solving} or exhibiting a switching behaviour~\cite{AshwinTimme2005}. Revitalised by the recent interest in quantum computing, another possibility is to formulate a computational task as a result of a Hamiltonian optimisation problem. In this case, a solution can be sought as the stationary point of the gradient flow~\cite{brockett1991} or as the outcome of (stochastic) evolution dynamics inspired by the spin glass~\cite{zdeborova2016} and Ising~\cite{fan2023,Lucas2014,mohseni2022,wang2023} models. 
Dynamical systems-based computation is indispensable for approximating problems of computationally hard complexity, with the most promising example of progress so far being clustering and segmentation problems~\cite{vanGennip2019}. Furthermore, ongoing efforts are underway to address other combinatorial optimisation problems within the non-deterministic polynomial (NP) class~\cite{Lucas2014}.

Recent studies in complex networks revealed examples of emergent behaviour in groups of identical agents that interact due to simple rules. The latter can be observed either dynamically, as in synchronisation~\cite{zhang2021}, balancing~\cite{marvel2011},
flocking~\cite{vicsek1995} and melting-like phenomena~\cite{alalwan2019}, or structurally, as in studies of phase transitions in networks~\cite{dSouza2015,sun2023d}. In this paper, we consider an emergent phenomenon in a group of interacting species on a network and show that in the purely competitive regime this system gives a means of approximating a solution to a computationally hard graph-theoretical problem.

The Maximum Independent Set (MIS) problem is a fundamental NP-hard problem~\cite{GareyMR} in computer science, known for numerous applications in, for example, radio network optimization~\cite{ChamaretB}, drug discovery~\cite{meinl2011maximum}, DNA sequencing~\cite{joseph1992determining}, network alignment~\cite{avila2023} or searching a given pattern in a network (subgraph isomorphism). 
A subset of vertices in a graph is called independent when any pair of them is not connected with an edge.
The objective of the MIS problem is to find the largest independent set in a given graph.  Exact algorithms for solving the MIS problem have exponential complexity. 
For this reason, finding ad-hoc fast algorithms is an active area of research~\cite{GuptaM, PelilloMHeuristic, pelillo2006payoff}.

The Lotka-Volterra (LV) system on a network, eminent in ecology, is a system of ordinary differential equations (ODEs) originally introduced to model the evolution and coexistence of interacting biological species~\cite{BuninG,gilpin1973hares, lotka2002contribution}. Currently, other types of interacting agents have also been considered, for example, to show how dynamics of competing firms may lead to an onset of a financial crisis~\cite{comes2012banking, moran2019may}.

One peculiar observation is that the stationary point of the Lotka-Volterra system features a cascade of qualitatively different modes, or bifurcations, when competitive pressure gradually increases. These bifurcations correspond to the events when one of the species becomes extinct. Because of its biological relevance, the first extinction event was referred to as resilience transition~\cite{gao2016} and has been shown to be determined by the structure of the network using spectral graph theory. The crux of our method is to consider the complete cascade of all bifurcation points and show that it ends with the set of surviving species that have a specific graph-theoretical interpretation: 1) they form an \emph{independent} set, and  2) this set has a \emph{maximal} size (\emph{i.e.} it is not necessary the largest one but it cannot be trivially increased without violating the independence property). Furthermore, we show that this principle maximises the output set with effectiveness being comparable to other heuristic algorithms, specifically designed for approximating the largest (maximum) independent set in a graph.

In the rest of the paper we introduce the basic LV system and provide theoretical arguments for why it finds maximal independent sets. Then, we introduce an improved Continuation LV algorithm (CLV) that employs numerical continuation techniques to explore the stable manifold of the LV system and obtain a better approximation of the linear program. We then show that this procedure is equivalent to sequential removal of vertices identified by their Katz centrality. We demonstrate the effectiveness of our algorithms in finding large independent sets by applying them to various networks and comparing with existing benchmark algorithms. 

\section{Results}
Our main result is exemplified through a simple example involving two vertices connected by an edge. Each vertex is associated with the logistic equation for population growth, with growth rate $r>0$, self-inhibition rate $a>0$ and a product coupling strength $\tau>0$:
\begin{align*}
   \begin{cases}
        \frac{\mathrm{d}}{\mathrm{d}t}x_1 &= x_1\left(r - \tau x_2 \right) - a x_1^2,  \\
        \frac{\mathrm{d}}{\mathrm{d}t}x_2 &= x_2\left(r - \tau x_1 \right) - a x_2^2,
    \end{cases}
\end{align*}
where $x_1(t)$ and $ x_2(t)$ represent the concentration of species over time. 
Consider the limiting behaviour at large time, denoted as $x^*_i=\lim\limits_{t\to\infty}x_i(t),$ for $i=1,2$. Local stability analysis reveals that for any $\tau>a$ and distinct initial conditions, $0<x_1(0), x_2(0)<1$, there are exactly two stable stationary states $(0, r/a)$ and $(r/a, 0)$. 
These two states arise when self-inhibition has less influence on population dynamics than pair-wise interactions. 
In low competition scenarios, $\tau<a$, an additional stationary point with positive values emerges.
However, when pair-wise interactions dominate, surviving species cannot coexist with other species in their vicinity. This means that surviving species do not compete with their direct neighbors anymore, leading to stabilisation in the form of uncoupled logistic equations.

When $a$ and $r$ are equal to one, these states resemble the toggle switch: either $x_1^*=1 $ and $ x_2^*=0$, or $x_1^* = 0$ and $x_2^*=1$. 
Using the terminology from logic, the system implements the XOR gate $x_1^* \oplus x_2^* = 1$, as one of the species will asymptotically achieve concentration 1, but never both species simultaneously.
In this example, the states $(1,0)$ and $(0,1)$ correspond to maximal independent sets in the underlying graph.

In general networks, maximal independent sets may  have different sizes, and the related problem of finding the \emph{largest} maximal independent set, maximising the sum $\sum\limits_i x^*_i$ for a given graph, is a known computationally hard problem.
We define the LV system for an arbitrary network with adjacency matrix $A$, 
\begin{equation}\label{eq:sys}
    \frac{\mathrm{d} \mathbf{x}}{\mathrm{d}t} = \textbf{x} \circ ( \textbf{1} - M\textbf{x} ),
\end{equation}
where $\circ$ denotes component-wise multiplication. Here, $\mathbf{x} \in \mathbb{R}^n$ is a vector representing species concentrations, $M = \tau A + \mathbb{I}$ is the interaction matrix, $\mathbb{I}$ denotes the identity matrix, and $\textbf{1}$ is a vector with all components equal to 1.

The number of stationary points of system~\eqref{eq:sys} can vary significantly based on the values of $\tau$ and the underlying network structure. For the trivial graph with no edges, there are $2^n-1$ stable stationary points, while an arbitrary graph and small enough $\tau$ yields only one, globally stable, stationary point $\mathbf{x^*}=M^{-1}\textbf{1}$~\cite{GohBS}. It turns out that in general, as $\tau$ increases, more stable stationary points emerge in this system. When $\tau>1$, the stationary points become binary and satisfy multiple XOR constraints -- one for each edge. 

The connection between independent sets and system~\eqref{eq:sys} becomes apparent when considering the linear integer program formulation for the maximal independent set problem:
\begin{align*}
    &\text{find $x \in \mathbb R^n$} \\
    &\text{such that:} \\
    &x_i (A x)_i = 0 \text{ for all } i \in [n]     &&\text{(A) \textit{Independence}} \\
    &( A + \mathbb{I})\mathbf{x} > 0 \text{ component-wise}  &&\text{(B) \textit{Maximality}} \\
    &x_i( 1 - x_i) = 0                              &&\text{(C) \textit{Binarity}} 
\end{align*}
Here, we deviate from the convention by initially defining $\mathbf{x}$ as a vector of real numbers and later requiring it to be binary with the last constraint. Quite remarkably, satisfying (A) and (C) can be achieved by merely considering the stationary points of dynamical system \eqref{eq:sys} with large enough $\tau$. Furthermore, constraint (B) turns out to be equivalent to local asymptotic stability of these stationary points. Note that the largest $x \in \{0,1\}^n$ that solves this linear integer program is a solution to the maximum independent set problem. 

The link between maximal independent sets and system~\eqref{eq:sys} is made formal by Theorem~\ref{th:lvalgorithm}.
\begin{theorem}\label{th:lvalgorithm}
    Let $A$ be the adjacency matrix of a simple undirected graph. Let $\tau>1$ and let $\mathbf{x}(t)$ be the trajectory of LV system~\eqref{eq:sys}, initialized with initial condition $\mathbf{x_0}$. Then $\mathbf{x^{*}} := \lim\limits_{t\to \infty} \mathbf{x}(t)$ exists, is binary, and the set $\{ v_i \ : \ x^{*}_i = 1 \}$ is a maximal independent set for almost all $\mathbf{x_0} \in (0,1)^n$. 
    For any maximal independent set, its indicator has a basin of attraction of non-zero measure.
\end{theorem}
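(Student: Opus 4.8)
The plan is to exploit the symmetry of $M=\tau A+\mathbb{I}$ to present \eqref{eq:sys} as a gradient flow and then read off the limiting behaviour from the critical points of the associated potential. First I would check that the closed cube $[0,1]^n$ is forward invariant: the faces $\{x_i=0\}$ are invariant because $\dot x_i$ carries a factor $x_i$, so the open positive orthant is preserved, while writing $\dot x_i = x_i(1-x_i)-\tau x_i (A\mathbf x)_i$ shows $\dot x_i\le 0$ on $\{x_i=1\}$ for nonnegative $\mathbf x$; hence a trajectory started in $(0,1)^n$ stays in the compact set $[0,1]^n$ with all coordinates positive. On this region I would introduce the potential
\[
 V(\mathbf x)=\mathbf 1^{\top}\mathbf x-\tfrac12\,\mathbf x^{\top}M\mathbf x ,
\]
whose Shahshahani gradient is exactly \eqref{eq:sys}, since $\partial_i V=1-(M\mathbf x)_i$ and $\dot x_i=x_i\,\partial_i V$. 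A direct computation then gives $\dot V=\sum_i x_i\,(1-(M\mathbf x)_i)^2\ge 0$, with equality precisely at equilibria. As $V$ is bounded on the cube, LaSalle's invariance principle confines every $\omega$-limit set to the equilibrium set, and because $V$ is a polynomial (hence real-analytic) the \L{}ojasiewicz gradient inequality upgrades this to convergence of each bounded trajectory to a single equilibrium $\mathbf x^{*}$, establishing existence of the limit.

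Next I would classify the equilibria by stability. Writing the support $S=\{i:x^{*}_i>0\}$ and its complement $Z$, the Jacobian of the right-hand side at $\mathbf x^{*}$ has entries $\delta_{ij}(1-(M\mathbf x^{*})_i)-x_i^{*}M_{ij}$; ordering $S$ before $Z$ it is block upper triangular with diagonal blocks $J_{SS}=-\diag(x^{*}_S)\,M_{SS}$ and $J_{ZZ}=\diag\!\big(1-(M\mathbf x^{*})_i\big)_{i\in Z}$. Since $\diag(x^{*}_S)$ is positive and $M_{SS}$ symmetric, $J_{SS}$ is stable iff $M_{SS}\succ 0$, while $J_{ZZ}$ is stable iff $(M\mathbf x^{*})_i>1$ on $Z$. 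The decisive input is Cauchy interlacing: if $S$ contained an edge, $A_{SS}$ would have the principal block $\left(\begin{smallmatrix}0&1\\1&0\end{smallmatrix}\right)$ and hence an eigenvalue $\le -1$, forcing $\lambda_{\min}(M_{SS})\le 1-\tau<0$ because $\tau>1$, contradicting $M_{SS}\succ 0$. Thus any \emph{stable} equilibrium has independent support, where $M_{SS}=\mathbb I$ and the equations $(M\mathbf x^{*})_i=1$ collapse to $x_i^{*}=1$ on $S$, so $\mathbf x^{*}$ is \emph{binary}; moreover $(M\mathbf x^{*})_i=\tau\cdot\#\{S\text{-neighbours of }i\}>1$ on $Z$ says exactly that every excluded vertex has a neighbour in $S$, i.e.\ $S$ is \emph{maximal}. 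Conversely, the indicator of a maximal independent set satisfies these equations with Jacobian eigenvalues equal to $-1$ on $S$ and $\le 1-\tau<0$ on $Z$, so it is a hyperbolic sink; its basin is open and therefore of positive measure, which is the final sentence of the theorem.

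It remains to promote ``stable equilibrium'' to ``limit for almost every initial condition'', by showing that the set of $\mathbf x_0\in(0,1)^n$ whose trajectory converges to an equilibrium that is \emph{not} a strict local maximum of $V$ is Lebesgue-null. For a hyperbolic non-maximal equilibrium this is immediate, as its stable manifold has codimension at least one. The hard part will be the possibly non-hyperbolic equilibria: by the interlacing argument every non-binary equilibrium has $M_{SS}$ indefinite or singular, so $J_{SS}$ has either a genuine unstable direction (a saddle, handled as above) or a zero eigenvalue. To dispose of the degenerate case I would invoke analyticity of the vector field: the stable/centre manifold theorem together with the \L{}ojasiewicz inequality confines the initial data attracted to any non-maximiser of $V$ to a countable union of embedded submanifolds of dimension $<n$, hence a null set. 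The technical crux I expect to have to secure is that no invariant piece of a centre manifold can itself attract positive measure --- equivalently, that $V$ cannot be locally constant along an attracting degenerate direction; I anticipate resolving this from $\dot V>0$ off the (lower-dimensional) equilibrium set, which forces any centre-manifold orbit to escape a neighbourhood of a non-maximal critical point. Removing this null set leaves a full-measure set of initial conditions converging to sinks, which are precisely the indicators of maximal independent sets.
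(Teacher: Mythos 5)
Your proposal follows essentially the same skeleton as the paper's proof: forward invariance of the cube by inspecting the flow on the faces, convergence to a single equilibrium, a block decomposition of the Jacobian into $J_{SS}=-\diag(\mathbf x^*_S)M_{SS}$ and $J_{ZZ}=\diag\big(1-(M\mathbf x^*)_i\big)_{i\in Z}$, a $2\times2$ principal-submatrix argument showing that an edge inside the support forces a positive eigenvalue when $\tau>1$, and stable/centre manifold theory to discard the non-sinks. Your characterisation of the stable equilibria as exactly the indicators of maximal independent sets, and the positive-measure basin claim, reproduce Lemmas~\ref{lem:cornerlemma}--\ref{lem:mis_stable} step for step (your interlacing phrasing and the paper's negative $2\times2$ principal minor are the same computation). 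The one genuinely different ingredient is the convergence step: the paper maps the system to a replicator equation (Theorem~\ref{thm:hofbauer}) and cites Akin--Hofbauer (Theorem~\ref{thm:akin}), whereas you exhibit the explicit potential $V(\mathbf x)=\mathbf 1^{\top}\mathbf x-\tfrac12\mathbf x^{\top}M\mathbf x$ with $\dot V=\sum_i x_i(1-(M\mathbf x)_i)^2\ge0$ and invoke LaSalle plus \L{}ojasiewicz. This is more self-contained and makes the ``gradient descent'' picture explicit, but it hides the real difficulty: the flow is a gradient only with respect to the Shahshahani metric $g_{ij}=\delta_{ij}/x_i$, which degenerates exactly on the faces $\{x_i=0\}$ where the binary limits live, and those limits are generally \emph{not} Euclidean critical points of $V$ (at a binary equilibrium with $x^*_i=0$ one typically has $\partial_iV(\mathbf x^*)=1-(M\mathbf x^*)_i\neq0$). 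Hence the standard \L{}ojasiewicz convergence theorem for analytic gradient flows does not apply off the shelf; the needed boundary-adapted estimate is precisely the content of the Akin--Hofbauer theorem the paper cites. Either cite that result (as the paper does) or supply the degenerate-metric version of the \L{}ojasiewicz argument.

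On the ``technical crux'' you flag for degenerate equilibria: your caution is warranted, but note that the paper is exposed to the same issue, since its exceptional set is an a priori uncountable union $\bigcup_{\mathbf x^*\in\mathcal F}\mathcal T_{\mathbf x^*}$ of measure-zero manifolds. The clean repair in both arguments is to observe that every non-binary equilibrium has a positive Jacobian eigenvalue bounded away from zero on compact pieces of the equilibrium set, so the local stable/centre-stable manifolds have locally uniform size and Lipschitz bounds and the union is covered by countably many Lipschitz graphs of codimension at least one. Your suggestion that $\dot V>0$ off the equilibrium set expels centre-manifold orbits is the right instinct but not by itself conclusive, since an orbit on a centre manifold could converge to a different point of a continuum of equilibria without $V$ being locally constant along it.
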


\noindent In biological terms, this theorem states that under certain  resource limitations, as indicated by $\tau>1$, the LV system reaches the stationary state with a maximal number of species that are not in direct competition, which equivalently corresponds to removing the minimal number of species from the system. 
For a complete proof of Theorem~\ref{th:lvalgorithm} we refer the reader to the Methods section.

Although, for some networks, the system may yield maximal independent sets for smaller $\tau$, the requirement $\tau>1$ is sufficient for any network. The tightness of the bound $\tau>1$ becomes evident when we consider complete graphs. Here, the only non-binary fixed point is represented by a constant vector $\mathbf{x^*}=(\tau(n-1) + 1)^{-1} \mathbf{1}$, and the Jacobian matrix at this point is
\begin{align*}
    J(x^*) &=-
    \begin{pmatrix}
    x^*_i      & \tau x^*_i & \dots  & \tau x^*_i   \\
    \tau x^*_i & x^*_i      &        & \vdots       \\
    \vdots     &            & \ddots & \vdots       \\
    \tau x^*_i & \dots      & \dots  & x^*_i
    \end{pmatrix}.
\end{align*}
By applying Sylvester's criterion to the Jacobian matrix, we conclude that the non-binary stationary point is stable for $\tau < 1$, thereby showing that the bound $\tau > 1$ is tight on the set of all graphs.

Any maximal independent set can be generated from randomly chosen initial conditions whenever $\tau>1$. Additionally, one may improve the approximation by repeating the procedure and searching for the largest set. It is important to acknowledge that this random search methodology may exhibit slow convergence, especially when applied to large graphs. It turns out that the efficacy of this strategy can be enhanced by gradually increasing  parameter $\tau$ in a series of iterative steps. Moreover, each iteration involves re-initializing the dynamical system using the solution obtained from the previous step. This notion is formally realized through the introduction of the Continuation Lotka-Volterra (CLV) algorithm.

\subsection{Continuation Lotka-Volterra algorithm}\label{sec:continuationalgorithm}
We present an algorithm that extends the LV methodology, while eliminating the need for explicit choices on the initial conditions or the parameter $\tau$. The conceptual idea of the algorithm is outlined as follows.

\begin{enumerate}
    \item \emph{Input}: graph $G$  and small parameter $\tau_{\text{step}} >0 $.
    \item Initialise $\tau = 0$ and $\mathbf{x^{*}} = (1, 1, \dots, 1)$.
    \item Increase $\tau$ by $\tau_{\text{step}}$ and run the Lotka-Volterra system \eqref{eq:sys} until convergence to stable equilibrium. \\Set $\mathbf{x^{*}}$ to  be the stationary point.
    \item Keep returning to step 3 until $\tau > 1$.
    \item \emph{Return}: all vertices $v_i$ such that $x_i = 1$.
\end{enumerate}

\noindent The CLV algorithm computes the stable manifold of the LV system while varying the interaction parameter $\tau$ between $0$ and $1$. Initially, at $\tau=0$, all differential equations are decoupled, and their solutions asymptotically converge to $1$ as logistic curves. Upon completion of the algorithm, when $\tau>1$, the solutions converge to an independent set, as indicated by Theorem~\ref{th:lvalgorithm}. Hence, the parameter $\tau$ allows us to anneal between the two regimes, hypothetically increasing the number of non-zero species. This transition is not gradual; instead, it involves the abrupt vanishing of some components of vector $\mathbf{x^*(\tau)}$ due to multiple bifurcation events, which, in turn, results in the final binary structure of the vector $\mathbf{x^*}$. Figure~\ref{fig:numericalcontinuationnumerical} illustrates this behaviour by depicting the values of all steady-state solution variables. In the Methods section, we explain how the algorithm can be implemented efficiently.

\begin{figure}[h!]
    \centering
    \includegraphics[width=\linewidth]{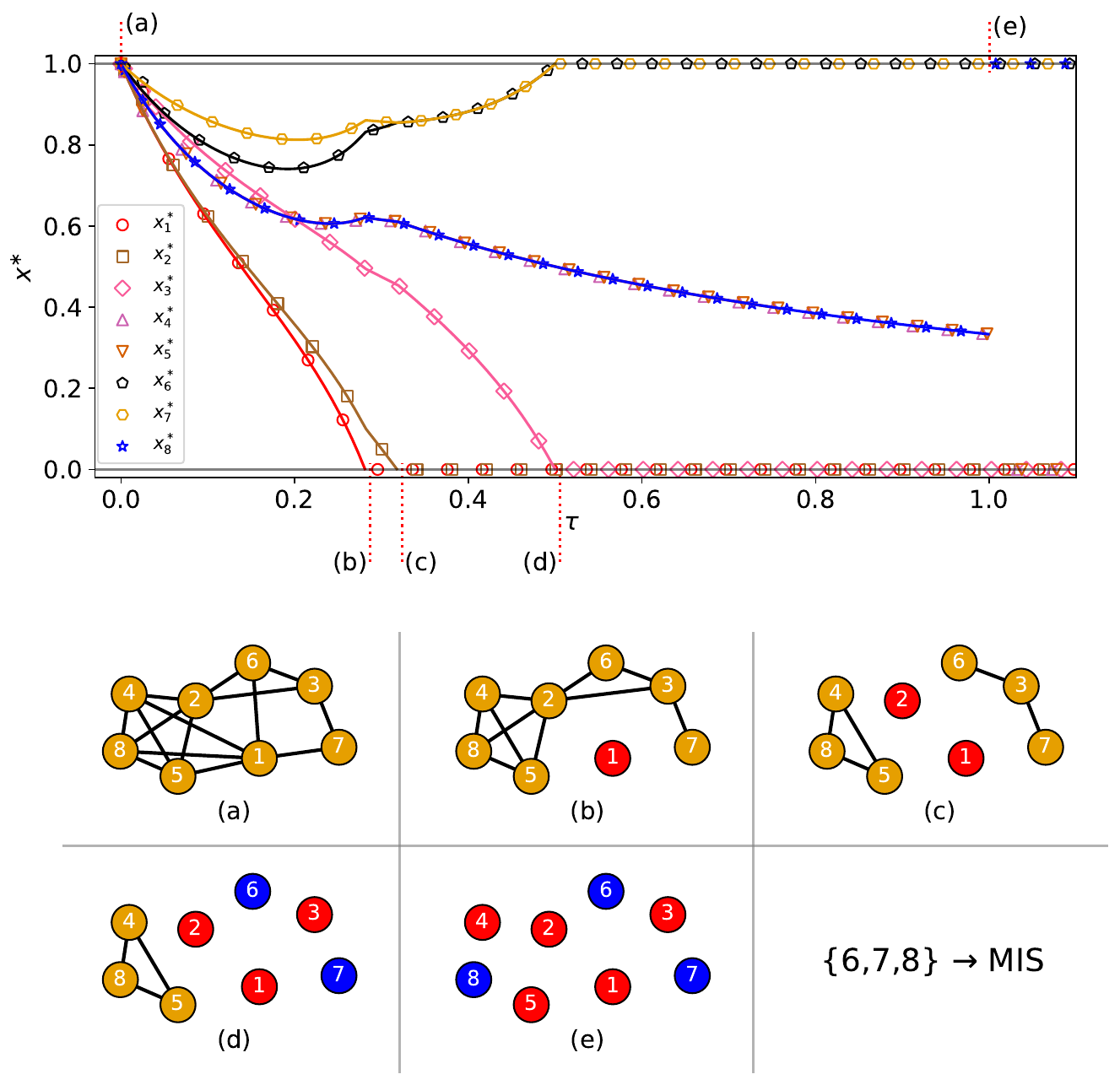}
    \caption{
        Numerical continuation~\cite{allgower_nc} of the stationary point as a function of the parameter $\tau$ in the Lotka-Volterra system reveals an independent set in the interaction network {\bf (a)}.
        As $\tau$ increases starting from 0, the first three bifurcations, indicated by {\bf (b)}, {\bf (c)}, {\bf (d)}, are transcritical bifurcations, while the bifurcation at {\bf (e)} is a pitchfork bifurcation leading to a discontinuous jump in the stationary point. 
        The panels below display the resulting networks after the bifurcations, using the following color coding: \emph{red} vertices vanish after the bifurcation, \emph{orange} vertices are strictly between 0 and 1, and \emph{blue} vertices become 1 after the bifurcation and are thus selected in the maximal independent set.
        }
    \label{fig:numericalcontinuationnumerical}
\end{figure}

Additionally, if the output of either the LV or CLV has size $s$, then at most $d_{\text{max}} = \max\limits_{v \in V} d(v)$ vertices are excluded from the output for each selected vertex.
Consequently, at least $\lfloor (1 + d_{\text{max}})^{-1} |V| \rfloor$ vertices are selected, giving a lower bound on the output of both the LV and CLV algorithm.

\subsubsection{Bifurcation cascade}
The discontinuities and jumps in Figure \ref{fig:numericalcontinuationnumerical} correspond to bifurcation events induced by changes stability of the equilibrium point. There are two possibilities: either the interior equilibrium point moves outside of the domain $[0,1]^n$ and the system reequilibrates at a new point (transcritical bifurcation), or the equilibrium point within the domain loses its local stability (pitchfork bifurcation). Both types of bifurcation result in vanishing if one or more variables, leading to the elimination of a vertex in the underlying graph.  Figure~\ref{fig:numericalcontinuationnumerical}, depicts  a series of transcritical bifurcations followed by a pitchfork bifurcation at the end. Pitchfork bifurcations require special attention when implementing the algorithm numerically -- when this type of bifurcation is detected, one has to randomise the value of $\mathbf{x^*(\tau)}$ by adding a random perturbation of machine-precision magnitude. This step enables the dynamical system to diverge from an unstable stationary solution. We proceed by examining two specific graph types in which we can demonstrate the exact performance of the algorithm.
\\ \\
{\noindent \bf Complete bipartite graph.} It is illustrative to examine the behaviour of the CLV algorithm on a complete bipartite graph, which consists of two distinct vertex sets $X$ and $Y$ with all edges between pairs of vertices in different sets. The maximum independent set is given by either $X$ or $Y$, depending on which vertex set has a greater number of vertices. The computations pertaining to this scenario can be found in the Methods section. The critical value $\tau_{\text{trans}}$ for the occurrence of a transcritical bifurcation is given by
\begin{align*}
    \tau_{\text{trans}} = \min\{ |X|^{-1}, |Y|^{-1} \},
\end{align*}
while for the pitchfork bifurcation we have
\begin{align*}
    \tau_{\text{pitch}} = ( |X| |Y| )^{-1/2}.
\end{align*}

\noindent Since $\tau_{\text{pitch}} > \tau_{\text{trans}}$, unless $|X|=|Y|$, it follows that the CLV algorithm does not encounter pitchfork bifurcations for complete bipartite graphs, unless they have an equal number of vertices in both parts. Consequently, due to the inherent symmetry of the graph, all vertices on the smaller side are simultaneously eliminated, leading to the exact solution of the MIS problem. This contrasts the LV algorithm, which provides an exact answer for all initial conditions only when both parts of the bipartite graph have an equal number of vertices. (This is because maximal independent sets are also maximum sets for well-covered graphs.~\cite{plummer1993well})
\\ \\
{\noindent \bf Path graph.}
We further illustrate the difference between the LV and the CLV algorithm by demonstrating their behaviour on a path graph of length $n$, where every vertex has degree two, except for the outermost vertices that have degree one. The maximum independent sets in such graphs alternate, containing vertices with either all odd or all even indices. In the Methods section, we show that when the total number of vertices $n$ is even, a pitchfork bifurcation occurs prior to any transcritical one. The bifurcation then removes the neighbour of one of the two outermost vertices. That is, the $2$nd or $n-1$th vertex of the path graph, with the pitchfork perturbation deciding which of the two is removed. After such removal, the resulting graph again has an even number of $n-2$ vertices, and the cycle repeats until all vertices are depleted.
When the number of vertices $n$ is odd, the transcritical bifurcation occurs first, at $\tau=1/2$, removing all vertices at even positions, hence $2,4,\dots, n-1$.  Hence, for any $n$, the CLV algorithm recovers an exact MIS. Examples of pre- and post- bifurcation stationary states for odd and even $n$ are illustrated in Figure~\ref{fig:PathGraphFigure}. In contrast to the CLV, the LV algorithm may produce any maximal independent set, including maximal independent sets that have gaps of size two, resulting in suboptimal solutions.

\begin{figure}[h!]
    \centering
    \includegraphics[width=\linewidth]{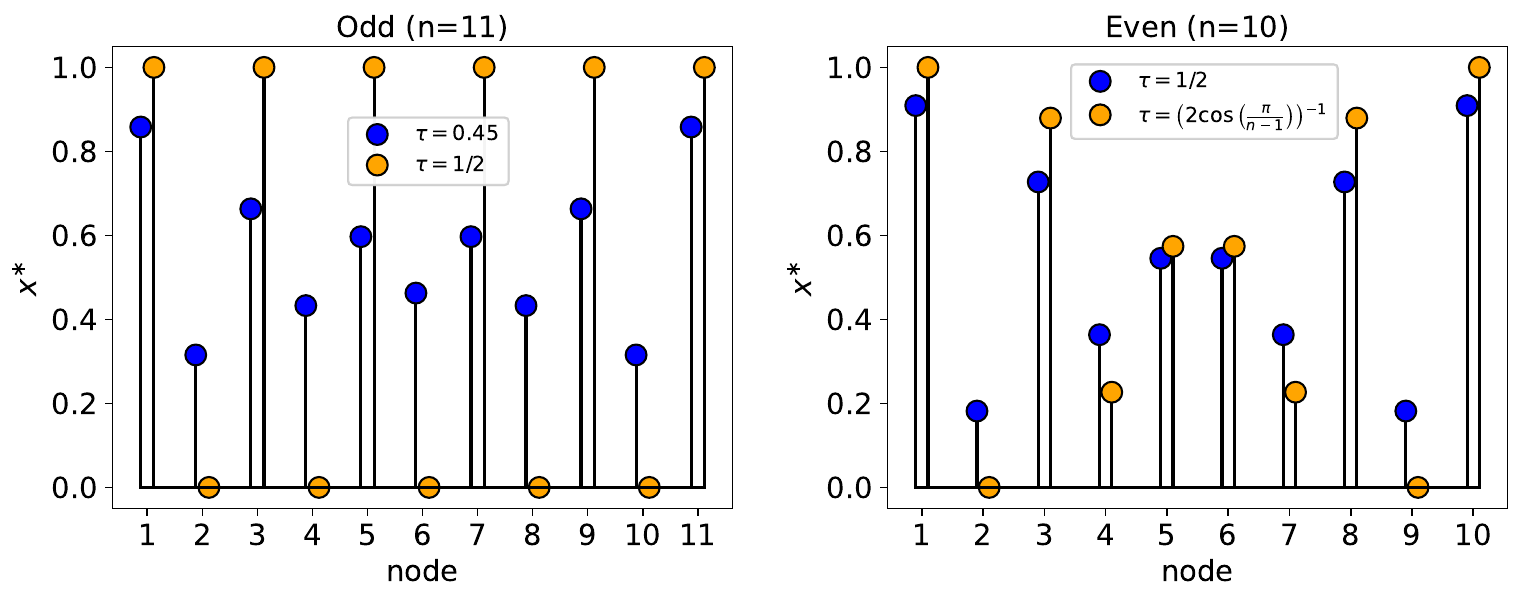}
    \caption{The first bifurcation point in the CLV algorithm on the path graph depends on the parity of $n$, being either pitchfork or transcritical. The values of the steady state $\mathbf{x^{*}}$ plotted for pre- and post-bifurcation values of $\tau$ for odd and even $n$ illustrate this distinction.}
    \label{fig:PathGraphFigure}
\end{figure}

\subsubsection{Link to the Katz centrality}
The Continuation LV algorithm can be interpreted as a greedy algorithm with a sequential elimination strategy, which assesses vertices of a graph $G$ based on their Katz centrality scores, given by
\begin{align}
    C_{\text{Katz}}(v_k) &= \sum_{i=1}^{\infty} \alpha^i \left( (A^i)\mathbf{1} \right)_{k}, \quad \text{for $|\alpha|<\rho(G)$},
\end{align}
where $\rho(G)$ is the spectral radius of $G$.
It turns out that the CLV algorithm resembles ranking nodes with the Katz centrality using negative parameter $\alpha=-\tau<0$. 
Note that the Katz centrality quantifies the influence of vertices by considering their distance from the source node. 
Choosing negative $\alpha$ reproduces this feature and penalises vertices that have a large number of neighbours at odd distances and favours vertices with large number of neighbours at even distance, as illustrated in Figure~\ref{fig:katzcentralitylink}. 
In essence, including the neighbours of neighbours offers an advantage for maximizing the size of an independent set.
A quick way to formalize this intuition is to express the interior equilibrium point in the CLV in terms of walks on a graph (for $\tau$ smaller than the spectral radius of $G$):
\begin{align}
    x^{*}_k &:= (M^{-1} \mathbf{1})_k 
    = \sum_{i=0}^{\infty} \tau^{i} (-1)^{i} \left( A^{i} \mathbf{1} \right)_k. \label{eq:katz}
\end{align}
The term $(-1)^{i}$ ensures that walks of even length are counted as a bonus, while walks of odd lengths are counted as a penalty. 
The first term in the summation in Eq.~\eqref{eq:katz} is equal to one for all $k \in [n]$. This term is absent in the Katz centrality, but because the term is constant for all nodes, it does not affect the vertex ordering.
For every vertex $v_i \in V$, we calculate the corresponding centrality as $\tau_i := \inf\{ \tau \geq 0 \ : \ x^{*}_i = 0 \}$, while also allowing infinity whenever such a $\tau_i$ value does not exist.
\begin{figure}[H]
    \centering
    \includegraphics[width=0.5\linewidth]{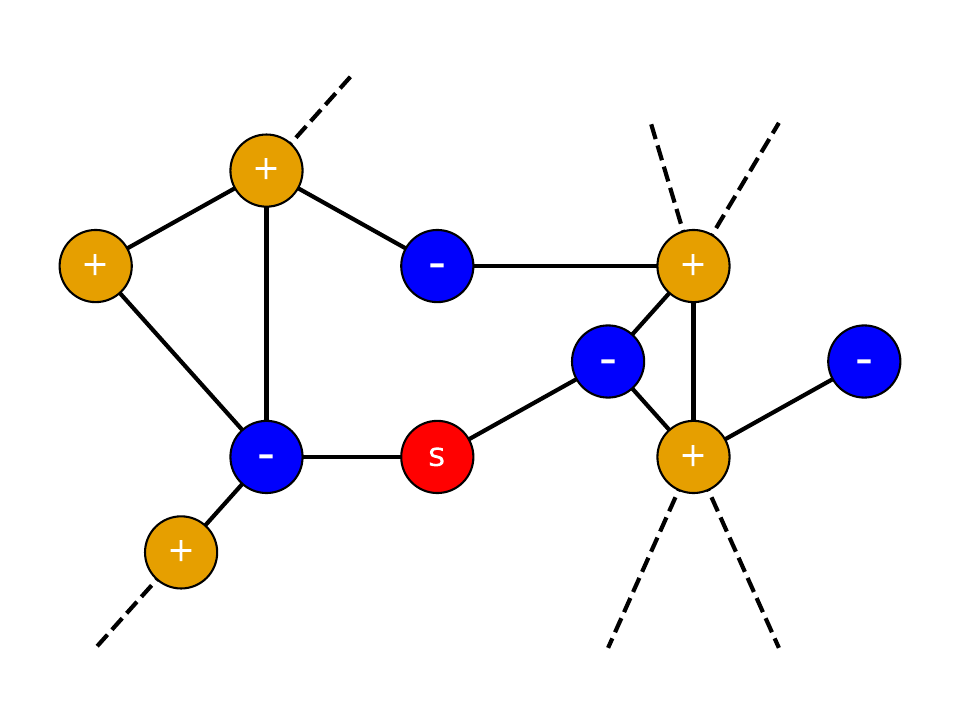}
    \caption{
    For a fixed vertex $s$, its corresponding value of the fixed point receives positive or negative contributions from any other vertex depending on the parity of the shortest paths between this vertex and $s$. The figure illustrates such contributions on an example graph.}
    \label{fig:katzcentralitylink}
\end{figure}
Note that every $j$th term is being scaled with a factor $\tau^j$, prioritizing the significance of first passage over subsequent ones. One can interpret this as close vertices (with regards to shortest path length) having a greater influence on the behaviour of a vertex than vertices that are relatively far away.

\subsection{Numerical results}
Numerical experiments reveal that the independent sets generated by the LV algorithm are strongly biased towards large sizes. 
To investigate this empirically, we utilized the Python PuLP package and the Bron-Kerbosch algorithm~\cite{bron1973algorithm} to compute all maximal independent sets in several random graph models: Erd\H{o}s-Ren\'yi, random bipartite, random geometric graphs, and Barabasi-Albert graphs. 
Figures~\ref{fig:NumericalErdosPerformance}a,~\ref{fig:NumericalBipartitePerformance}a,~\ref{fig:NumericalGeometricPerformance}a, and \ref{fig:NumericalBAPerformance}a demonstrate that the sets generated by the LV algorithm are located at the extreme tail of the size distribution. 
This observation is rather surprising as there is no apparent mechanism for the observed bias. 
Such a pronounced bias is advantageous when performing approximations in practice. 
In fact, the LV system performs similar to popular heuristic algorithms for approximating the MIS problem: Generally, better performance (as measured in terms of approximation factor, percentage of times the output is maximum, and the worst case) than the RPP, Luby and Blelloch algorithms and worse than Minimum Degree Greedy (MDG) algorithm, see Figures~\ref{fig:NumericalErdosPerformance}b,d,e,f,g,h,i, and also the corresponding panels in Figures \ref{fig:NumericalBipartitePerformance}, \ref{fig:NumericalGeometricPerformance}, and \ref{fig:NumericalBAPerformance}. The only exception being the Random Bipartite graph where both Greedy and Blelloch algorithms outperform the LV for some values of the parameter.

Switching to our CLV algorithm, we see even better performance. CLV consistently provides the best results for ER graphs and Barabasi-Albert graphs, and for the Random bipartite graphs the CLV algorithm is comparable with the MDG, with better performance for larger graphs.
This is demonstrated in Figures~\ref{fig:NumericalErdosPerformance}, \ref{fig:NumericalBipartitePerformance}, \ref{fig:NumericalGeometricPerformance}.
On geometric graphs, the CLV outperforms all benchmark algorithms except for MDG, see Figure \ref{fig:NumericalGeometricPerformance}.

We also tested the LV, CLV, MDG, RPP, Luby, and Blelloch's algorithm on graphs from the DIMAC database~\cite{dimacgraphs}. 
The results, detailed in Table~\ref{tab:DIMAC}, demonstrate that the (C)LV performs similar to that of the benchmark algorithms in the majority of cases (out of 51 cases, in 47 cases (C)LV performs the same or better, and in 9 cases the (C)LV algorithm outperforms the rest.).

\begin{figure}[H]
    \centering
    \includegraphics[width=\linewidth]{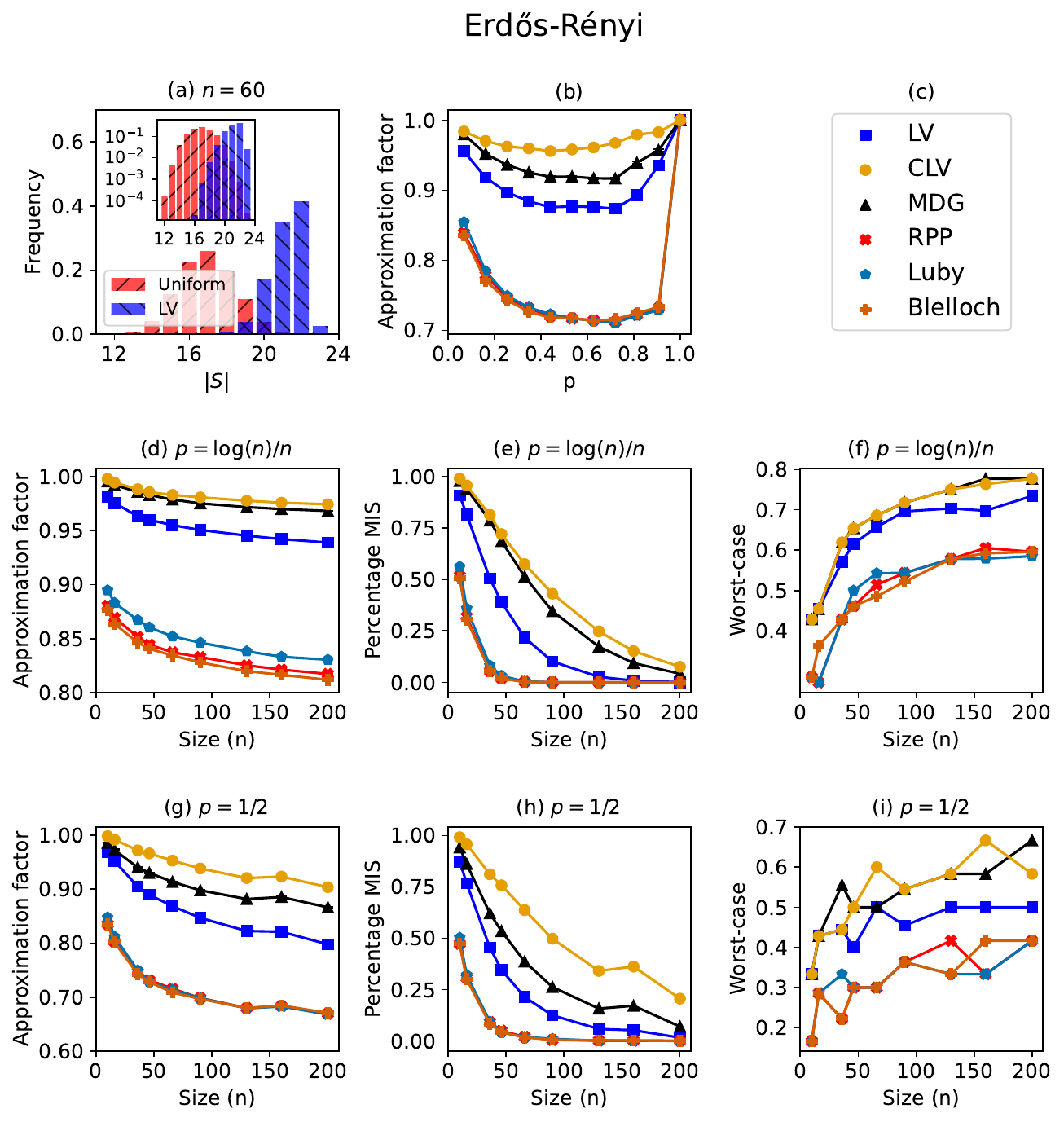}
     \caption{
     The performance of the LV, CLV, MDG, RPP, Luby, and Blelloch algorithm when finding large independent sets in the Erd\H{o}s-Ren\'yi $G(n,p)$ graph; results are averaged over $5000$ runs and the error bars indicate one standard error.
    {\bf(a)} For graphs with $60$ vertices and connection probability $p=0.1$, the total size distribution of maximal independent sets is compared with the distribution generated by the LV algorithm to highlight the algorithm's strong bias towards large sets. The inset features the logarithm of the frequency to emphasise the overlap.
    {\bf(b)} The ratios between the identified and true size of independent sets in graphs with $60$ vertices and different values of parameter $p$. 
    {\bf(d,g)} The ratios between the identified and the true size of independent sets in graphs with a different number of vertices, d) $p=\log(n)/n$ and g) $p=1/2$.
    {\bf(e,h)} The fraction of runs where the algorithm outputs a maximum independent set for graphs with different number of vertices, e) $p=\log(n)/n$ and h) $p=1/2$.
    {\bf(f,i)} The size of the smallest set (worst-case) obtained over all the runs, f) $p=\log(n)/n$ and i) $p=1/2$.
    }
    \label{fig:NumericalErdosPerformance}
\end{figure}

\begin{figure}[H]
    \centering
    \includegraphics[width=\linewidth]{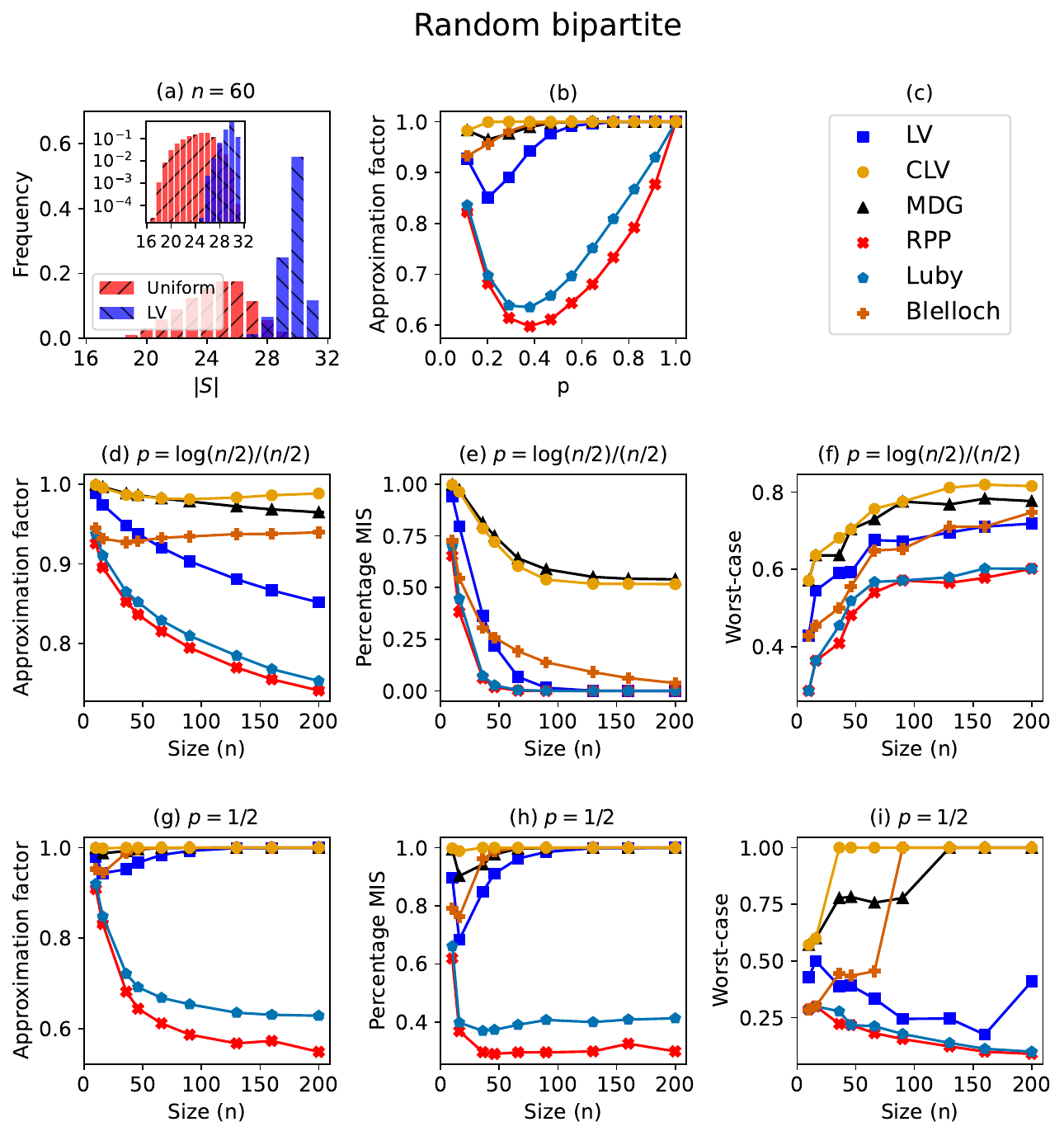}
     \caption{
    The performance of the LV, CLV, MDG, RPP, Luby, and Blelloch algorithm when finding large independent sets in the random bipartite graph; results are averaged over $5000$ runs and the error bars indicate one standard error.
    {\bf(a)} For graphs with $60$ vertices and connection probability $p=0.1$, the total size distribution of maximal independent sets is compared with the distribution generated by the LV algorithm to highlight the algorithm's strong bias towards large sets. The inset features the logarithm of the frequency to emphasise the overlap.
    {\bf(b)} The ratios between the identified and true size of independent sets in graphs with $60$ vertices and different values of parameter $p$. 
    {\bf(d,g)} The ratios between the identified and the true size of independent sets in graphs with a different number of vertices, d) $p=\log(n/2)/(n/2)$ and g) $p=1/2$.
    {\bf(e,h)} The fraction of runs where the algorithm outputs a maximum independent set for graphs with different number of vertices, e) $p=\log(n/2)/(n/2)$ and h) $p=1/2$.
    {\bf(f,i)} The size of the smallest set (worst-case) obtained over all the runs, f) $p=\log(n/2)/(n/2)$ and i) $p=1/2$.
    }    
    \label{fig:NumericalBipartitePerformance}
\end{figure}

\begin{figure}[H]
    \centering
    \includegraphics[width=\linewidth]{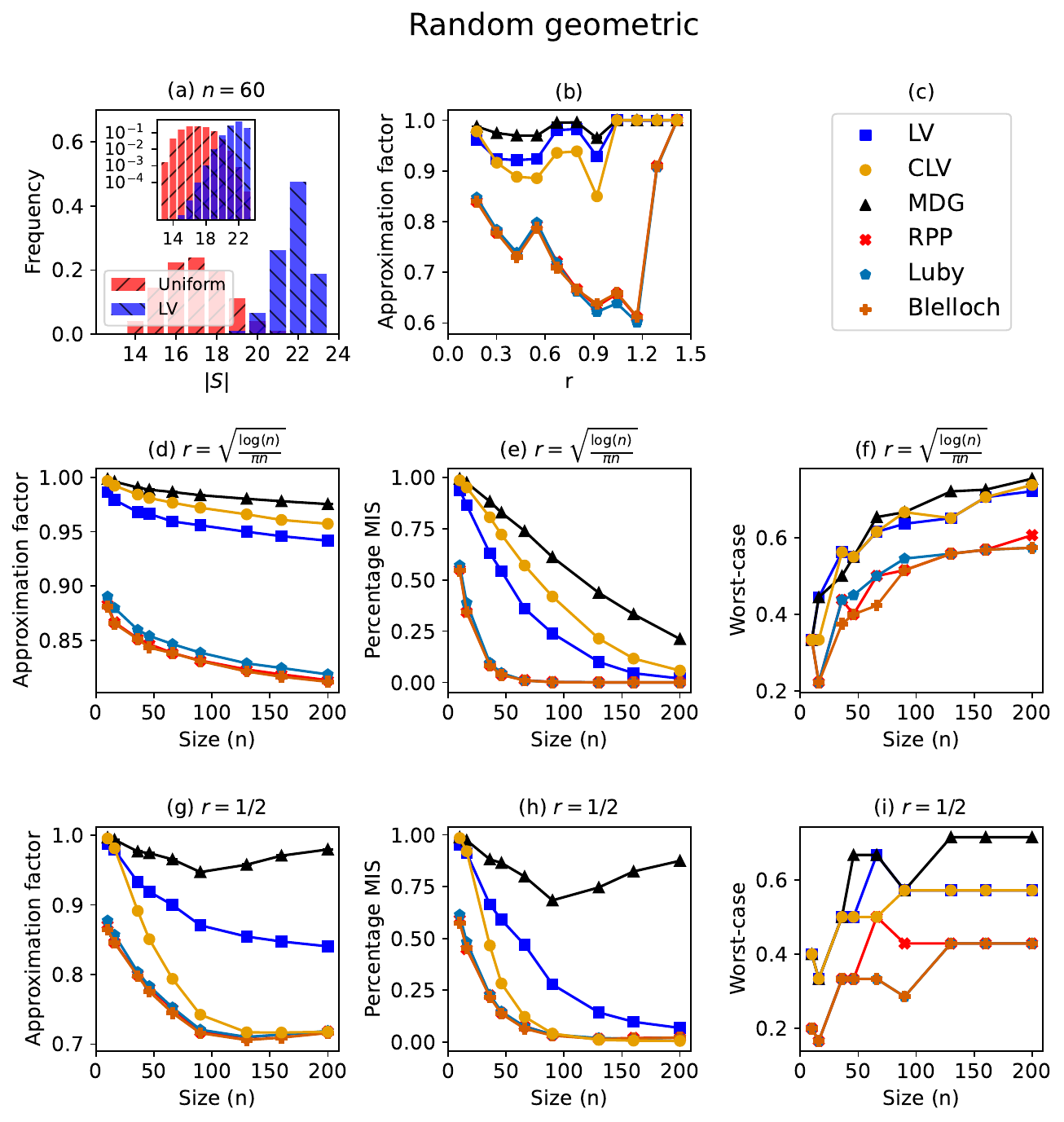}
    \caption{
    The performance of the LV, CLV, MDG, RPP, Luby, and Blelloch algorithm when finding large independent sets in the random geometric graph; results are averaged over $5000$ runs and the error bars indicate one standard error.
    {\bf(a)} For graphs with $60$ vertices and connection radius $r=0.1$, the total size distribution of maximal independent sets is compared with the distribution generated by the LV algorithm to highlight the algorithm's strong bias towards large sets. The inset features the logarithm of the frequency to emphasise the overlap.
    {\bf(b)} The ratios between identified and true size of independent sets in graphs with $60$ vertices and different connection radii $r\in(0,\sqrt{2})$.
    {\bf(d,g)} The ratios between identified and the true size of independent sets in graphs with different number of vertices, d) $r=\sqrt{\frac{\log(n)}{\pi n}}$ and g) $r=1/2$.
    {\bf(e,h)} The fraction of runs that the algorithm outputs a maximum independent set for graphs with different number of vertices, e) $r=\sqrt{\frac{\log(n)}{\pi n}}$ and h) $r=1/2$.
    {\bf(f,i)} The size of the smallest set (worst-case) obtained over all the runs, f) $r=\sqrt{\frac{\log(n)}{\pi n}}$ and i) $r=1/2$.
    }
    \label{fig:NumericalGeometricPerformance}
\end{figure}

\begin{figure}[H]
    \centering
    \includegraphics[width=\linewidth]{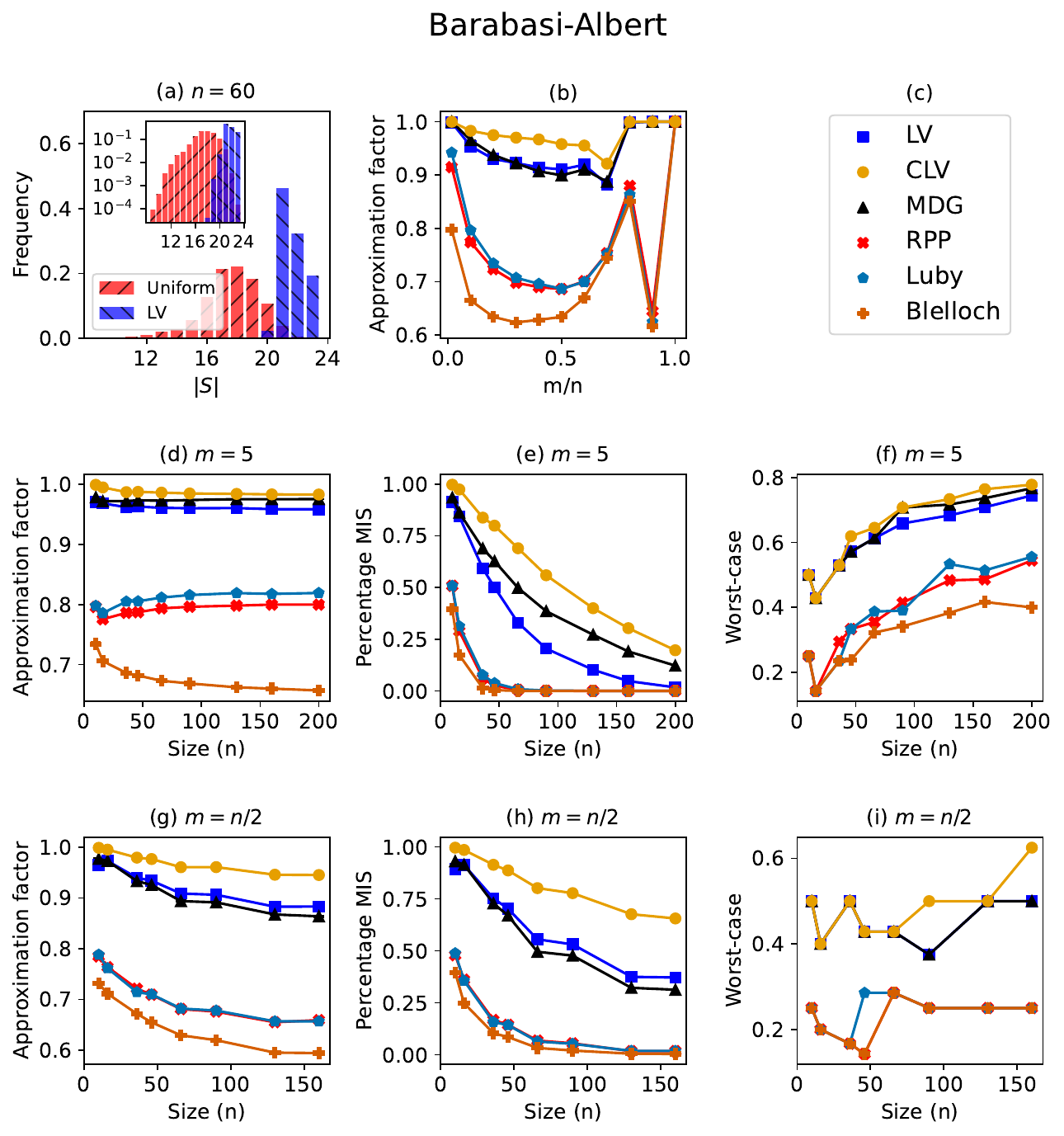}
    \caption{
    The performance of the LV, CLV, MDG, RPP, Luby, and Blelloch algorithm when finding large independent sets in the barabasi-albert graph; results are averaged over $5000$ runs and the error bars indicate one standard error.
    {\bf(a)} For graphs with $60$ vertices and initial number of neighbours $m=5$, the total size distribution of maximal independent sets is compared with the distribution generated by the LV algorithm to highlight the algorithm's strong bias towards large sets. The inset features the logarithm of the frequency to emphasise the overlap.
    {\bf(b)} The ratios between identified and true size of independent sets in graphs with $60$ vertices and fraction of initial number of neighbours $m/n \in(0,1]$.
    {\bf(d,g)} The ratios between identified and the true size of independent sets in graphs with different number of vertices, d) $m=5$ and g) $m=n/2$.
    {\bf(e,h)} The fraction of runs that the algorithm outputs a maximum independent set for graphs with different number of vertices, e) $m=5$ and h) $m=n/2$.
    {\bf(f,i)} The size of the smallest set (worst-case) obtained over all the runs, f) $m=5$ and i) $m=n/2$.
    }
    \label{fig:NumericalBAPerformance}
\end{figure}

\section{Discussion}
We demonstrate a simple mechanism that enables competitive dynamical systems to perform complex computations, namely generating large independent sets on networks. This can be viewed as a new example of computation that can spontaneously emerge in complex systems (natural computing) or be introduced by design (analog computing).
By analysing this mechanism, we proposed two methods for generating large independent sets, which approximate the corresponding NP-hard problem. 

The first approach, the LV algorithm, resembles a gradient descent system, combined with a barrier at $x_i=0$ that ensures the solution remains positive. While there are numerous ways to relax a discrete cost function into a continuous potential, our LV system retains an important favourable property of the original discrete problem: steady-state solutions are binary, and hence the identify subsets of vertices. Moreover, the corresponding continuous optimisation problem has many local minima, which, in turn, identify all maximal independent sets. 

The second approach, the CLV algorithm, can be interpreted as a chain of convex optimisation problems, indexed by values of $\tau \in (0,1]$. Since each of these problems is convex, it admits a unique global solution, which we have also demonstrated to be equivalent to scoring vertices with the Katz centrality. Because of convexity of the individual problems, this is a deterministic algorithm, and moreover, it implements a so-called greedy strategy, removing  vertices one by one, as identified by their centrality. 
This can be interpreted as subjecting the competitive species to gradually increasing pressure of resource depletion. 

Our algorithms show good performance on several popular random graphs. These are either the graphs where the vertices independently connect to each other (the Erd\H{o}sh R\'enyi), and graphs where the degree distribution is a power law originating from the richer-get-richer principle. In both cases, the CLV algorithm showed better performance than the LV algorithm.
Both algorithms preform less well on the random geometric graphs where a simple greedy strategy  outperforms both algorithms.
Sensitivity to underlying geometry of the network, may provide domain restrictions for our algorithms.

Although a realisation of CLV can be interpreted as a greedy algorithm with Katz centrality ranking, the specific type of centrality comes naturally from the LV dynamics as the (transcritical) bifurcation point of this system. Given that the LV system provides a continuous formulation for the maximal independent set problem, our work reveals intricate connection between the Katz centrality and the latter problem.
It is worth mentioning that other dynamical system-inspired centralities include recently proposed DomiRank~\cite{engsig}, which can be seen as a linearised version of the LV system.
Domirank was shown to be useful in partitioning networks, albeit the related bifurcation points are always of the pitchfork type.

In addition to presenting a competitive algorithm for finding large independent sets, the novelty of our approach resides in formulating our algorithm as a dynamical system.  We observe that gradual increase of competitive pressure results in more coexisting species. From a biological perspective, it is evident that when resources are depleted, only those species that do not directly compete will survive. 
However, it is remarkable that under such conditions, the system still tends to maximize the number of surviving species, which is a global property. 
 The paradigm of CLV can also be interpreted as an ecological system that has a tendency to adapt, and hence, dissipate a sudden shock far less effectively than a gradual change. Not having any apparent adaptation mechanism built into the model, our system nevertheless is able to feature such behaviour as an emergent phenomenon. This observation is unexpected and presents a promising direction for future research.

Finally, we highlight several applied areas where our results can be of interest. In computer science, two equivalent problems are finding a minimum vertex cover, the complement of the MIS, and identifying a maximum clique, which corresponds to the MIS in the dual graph. 
Furthermore, the precise detailing of the network will further specialises the problem. For example, using the notion of graph product, our problem becomes equivalent to (sub)graph isomorphism and network alignment problems, prevalent in molecular docking and drug discovery~\cite{gardiner2000}. 
In signal processing, a wide range of questions related to optimal codes can be formulated as the MIS problem~\cite{butenko2002}.
Also, finding large independent sets appears important for wireless sensor networks~\cite{moscibroda2005}. 
Another fundamental challenge in logic, the 3SAT constraint satisfaction problem, can similarly be formulated as the identification of a MIS. 
Interestingly, alternative dynamical system approaches to 3SAT give rise to chaotic dynamics~\cite{ercsey2011optimization}.

\section{Methods}
\subsection{Well-posedness of the LV algorithm}
To show the LV algorithm indeed converges to the claimed result we follow the following steps.
First, we show in Lemma~\ref{lem:escapelemma} that the trajectories are contained in the invariant set $[0,1]^n$ for $t \geq 0$. Second, Lemma~\ref{lem:convergence} states that while being contained in the invariant set, the trajectories converge to isolated points ({\it e.g.} instead of having limit cycles or chaotic behaviour). Third, we show in Lemma~\ref{lem:cornerlemma} that the locally stable stationary solutions must have a binary structure, when $\tau$ is sufficiently large. Finally, Lemma~\ref{lem:maximal} and Lemma~\ref{lem:mis_stable} establish that the latter solutions identify all maximal independent sets. Together, these lemmas prove Theorem~\ref{th:lvalgorithm}. 

\begin{lemma}\label{lem:escapelemma}
    Let $\mathbf{x}(t)$ be the trajectory of system~\eqref{eq:sys} and let $\mathbf{x_0}\in [0,1]^n$ be the corresponding initial condition. Then, $\mathbf{x}(t) \in [0,1]^n$ for all $t \geq 0$.
\end{lemma}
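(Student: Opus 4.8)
The plan is to verify that the vector field of \eqref{eq:sys} never points out of the box $[0,1]^n$ along its boundary, and then to invoke a standard invariance principle (Nagumo's subtangentiality theorem) to conclude forward invariance. Since the right-hand side $f(\mathbf{x}) = \mathbf{x}\circ(\mathbf{1}-M\mathbf{x})$ is polynomial, it is locally Lipschitz, so by Picard--Lindel\"of the trajectory $\mathbf{x}(t)$ through $\mathbf{x_0}$ exists and is unique on a maximal forward interval; confining it to the compact set $[0,1]^n$ will simultaneously guarantee that this interval is all of $[0,\infty)$.

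First I would treat the lower faces $\{x_i=0\}$. Writing the $i$-th equation as $\dot x_i = x_i\, g_i(\mathbf{x})$ with $g_i(\mathbf{x}) = 1-(M\mathbf{x})_i$, I note that $\dot x_i = 0$ whenever $x_i = 0$, so each coordinate hyperplane $\{x_i = 0\}$ is invariant. Equivalently, integrating along the trajectory gives $x_i(t) = x_i(0)\exp\!\big(\int_0^t g_i(\mathbf{x}(s))\,\mathrm{d}s\big)$, which preserves the sign of $x_i$; hence $\mathbf{x_0}\ge 0$ forces $\mathbf{x}(t)\ge 0$ for all $t$ in the interval of existence. In tangent-cone terms, at a boundary point with $x_i=0$ the $i$-th component of the field is $f_i = 0$, which lies in the admissible (inward) half-line, so the lower constraints are never violated.

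Next I would treat the upper faces $\{x_i=1\}$. Evaluating the field there gives $f_i = 1\cdot\big(1 - 1 - \tau(A\mathbf{x})_i\big) = -\tau (A\mathbf{x})_i$. Because $A$ is the adjacency matrix of a simple graph it is entrywise nonnegative, and because the trajectory has already been confined to $\mathbf{x}\ge 0$, we get $(A\mathbf{x})_i \ge 0$, hence $f_i \le 0$ on $\{x_i = 1\}\cap[0,1]^n$. Collecting both cases, at every boundary point $\mathbf{x}\in\partial[0,1]^n$ the field satisfies $f_i(\mathbf{x})\ge 0$ wherever $x_i=0$ and $f_i(\mathbf{x})\le 0$ wherever $x_i=1$, which is exactly the statement that $f(\mathbf{x})$ belongs to the Bouligand tangent cone of the box at $\mathbf{x}$. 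Nagumo's theorem then yields forward invariance of $[0,1]^n$, and since the trajectory remains in this compact set it exists for all $t\ge 0$.

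The main obstacle is the non-strict inequality on the upper faces: when $(A\mathbf{x})_i = 0$ the field is merely tangent to the face rather than strictly inward, so a naive ``first exit time'' argument that relies on a strict sign of $\dot x_i$ at the exit point breaks down. This is precisely the degenerate case that the tangent-cone / Nagumo formulation is designed to handle, so I would rely on that theorem (or, equivalently, an upper-Dini-derivative comparison applied to $\max_i x_i(t)$) rather than on a crossing argument; the remaining verifications are the routine coordinatewise sign computations above, which hinge only on the nonnegativity of $A$ and of $\mathbf{x}$ on the box.
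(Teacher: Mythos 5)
Your proof is correct and follows essentially the same route as the paper: the same coordinatewise sign checks on the faces $\{x_i=0\}$ (where the field vanishes) and $\{x_i=1\}$ (where $f_i=-\tau(A\mathbf{x})_i\le 0$). The only difference is that you make the invariance conclusion rigorous by invoking Nagumo's tangent-cone theorem and noting global existence via compactness, whereas the paper leaves these standard steps implicit.
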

\begin{proof}
    We calculate the flow of the system on the boundary of the unit hypercube $[0,1]^n$ and  show that the trajectories cannot leave this set. Since for any $i=1, \dots, n$ the flow $\frac{\rm{d}}{\rm{d} t}x_i=0$ when $x_i=0$, the trajectories cannot traverse these walls of fixed points and become negative when started inside the set. Furthermore, for arbitrary $i = 1, \dots, n$,
    \begin{align}
        \frac{\mathrm{d} x_i}{\mathrm{d} t} \bigg|_{x_i  = 1} &= x_i[ 1 - (M\mathbf{x})_i ] \bigg|_{x_i = 1}\\
        &= x_i - x_i M_{ii} x_i - x_i \tau \sum_{j\neq i} A_{ij} x_j \bigg|_{x_i = 1}\\
        &= 1 - 1 - \tau \sum_{j\neq i} A_{ij} x_j  \\
        &= - \tau \sum_{j\neq i} A_{ij} x_j \leq 0.
    \end{align}
    Hence, any coordinate of a trajectory cannot exceed one.
\end{proof}

\begin{theorem}[J.~Hofbauer and K.~Sigmund~\cite{hofbauer2003evolutionary}]\label{thm:hofbauer}
    There exists a differentiable, invertible map from $\hat{S}_{n+1} = \{ \mathbf{x} \in S_{n+1} : x_n > 0 \}$ onto $\mathbb{R}_{+}^{n}$ mapping the orbits of the replicator equations with interaction matrix $M'$
    \begin{align}
        \dot{y}_i &= y_i \left( (M'\mathbf{y})_i - \mathbf{y} M' \mathbf{y} \right) \quad i = 1, \dots, n+1
    \end{align}
    onto the orbits of the Lotka-Volterra equations with growth rates $\mathbf{r}$ and interaction matrix $M$
    \begin{align}
        \dot{x}_i &= x_i \left( r_i + \sum_{j=1}^{n-1} M_{ij} x_j \right) \quad i = 1, \dots, n
    \end{align}
    where $r_{i} = M'_{i,n+1} - M'_{n+1,n+1}$ and $M_{ij} = M_{ij}' - M_{n+1,j}'$.
\end{theorem}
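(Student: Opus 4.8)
The plan is to write down the map explicitly and then verify its three defining properties — diffeomorphism, coefficient identities, and orbit correspondence — by a single change of variables. The natural candidate is the projective chart that divides out the distinguished coordinate: define $\phi\colon \hat S_{n+1}\to\mathbb{R}_+^n$ by $x_i = y_i/y_{n+1}$ for $i=1,\dots,n$. This map is smooth on the open face where $y_{n+1}>0$, and it admits the smooth inverse $y_i = x_i/(1+\sum_{k=1}^n x_k)$ for $i\le n$ together with $y_{n+1} = (1+\sum_{k=1}^n x_k)^{-1}$; verifying that these formulas compose to the identity in both directions (and that the image lands in the simplex, since the $y_i$ are positive and sum to one) establishes that $\phi$ is a diffeomorphism onto $\mathbb{R}_+^n$.

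Next I would differentiate $x_i = y_i/y_{n+1}$ along a replicator orbit. Since $\log x_i = \log y_i - \log y_{n+1}$ and the replicator equation gives $\dot y_i/y_i = (M'\mathbf y)_i - \mathbf y^\top M'\mathbf y$, the logarithmic derivatives combine into
\[
\dot x_i = x_i\!\left(\frac{\dot y_i}{y_i}-\frac{\dot y_{n+1}}{y_{n+1}}\right) = x_i\big[(M'\mathbf y)_i-(M'\mathbf y)_{n+1}\big].
\]
The decisive observation is that the average-payoff term $\mathbf y^\top M'\mathbf y$ — the only nonlinear ingredient — does not depend on the index $i$ and therefore cancels in the difference, leaving an expression linear in $\mathbf y$. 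Writing $(M'\mathbf y)_i-(M'\mathbf y)_{n+1}=\sum_{j=1}^{n+1}(M'_{ij}-M'_{n+1,j})y_j$, splitting off the $j=n+1$ term, and substituting $y_j = x_j\,y_{n+1}$ for $j\le n$ yields
\[
\dot x_i = y_{n+1}\,x_i\!\left(r_i + \sum_{j=1}^{n} M_{ij}x_j\right),
\]
with exactly the coefficients $r_i = M'_{i,n+1}-M'_{n+1,n+1}$ and $M_{ij}=M'_{ij}-M'_{n+1,j}$ asserted in the statement.

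It then remains to interpret the scalar prefactor, and this is where the only real subtlety lies. Because $y_{n+1}>0$ everywhere on $\hat S_{n+1}$, the pushforward $\phi_*$ of the replicator field is a strictly positive multiple of the Lotka--Volterra field at each point; the two fields are thus parallel and identically oriented, so they share the same integral curves as sets. Making this precise, I would introduce the rescaled time $s$ with $\mathrm{d}s/\mathrm{d}t = y_{n+1}(t)>0$, under which the image equation becomes the Lotka--Volterra system exactly; since this reparametrization is strictly increasing, it preserves each orbit together with its direction of traversal. The main point to stress — and the reason the theorem is phrased in terms of orbits rather than time-parametrized trajectories — is precisely that $\phi$ intertwines the two flows only up to this positive time change, and that the positivity of $y_{n+1}$ on the open face is what guarantees the orbit correspondence is a direction-preserving bijection. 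I would conclude by noting that $\hat S_{n+1}$ and $\mathbb{R}_+^n$ are forward-invariant under their respective flows, so the correspondence holds globally on the stated domains.
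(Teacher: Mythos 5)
The paper does not prove this theorem; it is imported verbatim from Hofbauer and Sigmund~\cite{hofbauer2003evolutionary}, and your argument is precisely the standard proof given there: the projective chart $x_i = y_i/y_{n+1}$, cancellation of the mean-payoff term $\mathbf{y}^\top M'\mathbf{y}$ in the logarithmic derivatives, and a strictly positive time reparametrization by $y_{n+1}$ that identifies orbits rather than time-parametrized trajectories. Your proof is correct and complete; incidentally, it silently repairs two transcription slips in the paper's statement of the theorem, namely that the inner sum should run to $n$ rather than $n-1$, and that the distinguished coordinate in the definition of $\hat S_{n+1}$ should be the $(n+1)$-st one.
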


\begin{theorem}[E.~Akin and J.~Hofbauer~\cite{akin1982recurrence}]\label{thm:akin}
    Consider the replicator system, $\dot{y_i}= (M'\mathbf{y})_i - \mathbf{y}^T M' \mathbf{y}$ supplied with symmetric matrix $M'$. Let $\mathbf{y}(t)$ be the trajectory of this system, then $\lim\limits_{t\to\infty} \mathbf{y}(t)$ exists.
\end{theorem}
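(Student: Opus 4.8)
The plan is to recognise the average payoff as a Lyapunov function and then to upgrade the resulting convergence of this scalar quantity to convergence of the trajectory itself. The natural candidate is
\begin{equation*}
    V(\mathbf{y}) = \mathbf{y}^{T} M' \mathbf{y}.
\end{equation*}
First I would differentiate $V$ along solutions. Writing $f_i = (M'\mathbf{y})_i$ for the payoff of type $i$ and $\bar f = \mathbf{y}^{T} M' \mathbf{y} = V$ for the mean payoff, symmetry of $M'$ gives $\dot V = 2\,\mathbf{y}^{T} M' \dot{\mathbf{y}}$, and substituting the replicator equation yields
\begin{equation*}
    \dot V = 2\sum_i y_i f_i\bigl(f_i - \bar f\bigr) = 2\Bigl(\sum_i y_i f_i^{2} - \bar f^{\,2}\Bigr) = 2\,\mathrm{Var}_{\mathbf{y}}(f) \geq 0.
\end{equation*}
This is the replicator analogue of Fisher's fundamental theorem: $V$ is nondecreasing, and strictly increasing unless every type in the current support carries the same payoff. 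Because the trajectory stays in the compact simplex, $V$ is bounded above, so $V(\mathbf{y}(t))$ increases to a finite limit $V^{*}$ as $t\to\infty$.

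Next I would localise the limit set. By LaSalle's invariance principle, every $\omega$-limit point lies in the largest invariant subset of $\{\dot V = 0\}$, which by the computation above is exactly the set where the payoff variance vanishes, i.e. the set of rest points of the (restricted) replicator dynamics on each face of the simplex. Hence the $\omega$-limit set consists entirely of equilibria. Since the $\omega$-limit set of a bounded trajectory is always nonempty, compact, and \emph{connected}, the theorem follows immediately \emph{provided} the equilibria are isolated: a connected set of isolated points is a single point.

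The main obstacle is precisely that isolation can fail. For a general symmetric $M'$ the equilibrium set may contain a continuum (for instance when $M'$ has degenerate structure), and a priori the orbit could drift along such a continuum or wind inside the level set $\{V = V^{*}\}$ without settling. To close this gap I would make the gradient structure explicit: with respect to the Shahshahani metric on the open simplex, the symmetric replicator field is the gradient of $\tfrac12 V$ up to a positive scalar. Since $V$ is a polynomial, hence real-analytic, the \L{}ojasiewicz gradient inequality applies to this analytic gradient flow and forces every bounded orbit to have finite arc length, which in turn guarantees convergence to a single equilibrium. The boundary faces are forward-invariant (a coordinate that starts at zero stays at zero, and one that starts positive stays positive), so when the limit lies on the boundary the same gradient-plus-\L{}ojasiewicz argument is run on the relevant subsimplex; continuity of the flow then delivers existence of $\lim_{t\to\infty}\mathbf{y}(t)$ in all cases. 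Combined with the orbit-preserving map of Theorem~\ref{thm:hofbauer}, this is exactly what is needed to transfer convergence back to the Lotka–Volterra system.
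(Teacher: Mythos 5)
The paper offers no proof of this statement: Theorem~\ref{thm:akin} is imported verbatim from Akin and Hofbauer~\cite{akin1982recurrence} and used as a black box inside Lemma~\ref{lem:convergence}, so there is nothing in the paper to compare your argument against, and it must be judged on its own. (You also silently correct the paper's typo by restoring the factor $y_i$ in the replicator field; that is the right reading.) Your first two stages are sound and standard: $V(\mathbf{y})=\mathbf{y}^{T}M'\mathbf{y}$ satisfies $\dot V = 2\sum_i y_i f_i(f_i-\bar f)=2\,\mathrm{Var}_{\mathbf{y}}(f)\ge 0$ by symmetry of $M'$ (Fisher's fundamental theorem), LaSalle places the $\omega$-limit set inside the equilibria, and you correctly identify that the whole difficulty is that equilibria need not be isolated, so connectedness of the $\omega$-limit set alone does not finish the proof.

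The gap is in how you close that difficulty. The Shahshahani-gradient-plus-\L{}ojasiewicz route is legitimate in the \emph{interior} of the simplex, but the hard case of this theorem is an interior orbit whose $\omega$-limit set meets the boundary, and there your argument breaks in two places. First, you cannot ``run the same argument on the relevant subsimplex'': the orbit never lies in that face (the open simplex is invariant), so convergence of the restricted dynamics on the face says nothing about the approaching orbit, and ``continuity of the flow'' does not bridge this. Second, the Shahshahani metric $\sum_i y_i^{-1}\,dy_i^2$ degenerates as $y_i\to 0$, so the arc-length estimate needs the inequality $\lvert V(\mathbf{y})-V^{*}\rvert^{1-\theta}\le C\bigl(\sum_i y_i (f_i-\bar f)^2\bigr)^{1/2}$ with some $\theta\in(0,1]$, where the right-hand side is the \emph{weighted} gradient norm; this is not the classical \L{}ojasiewicz gradient inequality (whose right-hand side is the Euclidean norm $\bigl(\sum_i (f_i-\bar f)^2\bigr)^{1/2}$, which can be much larger near the boundary), and obtaining it with an exponent that keeps $(V^{*}-V)^{\theta-1}\dot V$ integrable requires a separate argument. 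This boundary case is precisely what makes the Akin--Hofbauer result nontrivial, and their paper handles it by a different, more hands-on analysis rather than by the generic analytic-gradient-flow machinery; as written, your proposal assumes away the only genuinely hard step.
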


\begin{lemma}\label{lem:convergence}
    System~\eqref{eq:sys} converges to a single equilibrium point that depends on the initial condition.
\end{lemma}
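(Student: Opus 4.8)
The plan is to deduce convergence of the Lotka--Volterra (LV) trajectories from the convergence of the associated replicator dynamics, exploiting the equivalence provided by Theorem~\ref{thm:hofbauer} together with the convergence result of Theorem~\ref{thm:akin}. The essential point is that the symmetry of the adjacency matrix $A$ --- and hence of $M=\tau A+\mathbb I$ --- lets us realise system~\eqref{eq:sys} as the image of a replicator system whose interaction matrix can be taken \emph{symmetric}, which is exactly the hypothesis under which Theorem~\ref{thm:akin} guarantees that the replicator orbit has a limit. Transporting this limit back through the map of Theorem~\ref{thm:hofbauer} then yields the claimed convergence of $\mathbf{x}(t)$.

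Concretely, I would first put system~\eqref{eq:sys} in the form appearing in Theorem~\ref{thm:hofbauer}, $\dot x_i = x_i\big(r_i + (\tilde M \mathbf{x})_i\big)$, by reading off $r_i = 1$ and $\tilde M = -M = -(\tau A + \mathbb I)$; since $A$ is symmetric, so is $\tilde M$. Next I would produce an $(n+1)\times(n+1)$ replicator matrix $M'$ satisfying the relations $r_i = M'_{i,n+1}-M'_{n+1,n+1}$ and $\tilde M_{ij}=M'_{ij}-M'_{n+1,j}$; the simplest choice takes the last row of $M'$ to be zero, $M'_{ij}=\tilde M_{ij}$ for $i,j\le n$, and $M'_{i,n+1}=1$. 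This $M'$ is not symmetric, but the replicator vector field is invariant under the column-additive gauge $M' \mapsto M' + \mathbf{1} c^{\top}$ for any $c$, and this gauge leaves the two defining relations unchanged. I would then check that the antisymmetric part of $M'$ is a coboundary, $M'_{ij}-M'_{ji}=c_i-c_j$ with $c=(1,\dots,1,0)$, so that $M''=M'+\mathbf{1} c^{\top}$ is symmetric; the verification reduces precisely to the symmetry of $\tilde M$. Thus system~\eqref{eq:sys} corresponds, via Theorem~\ref{thm:hofbauer}, to a replicator system with the symmetric matrix $M''$, and Theorem~\ref{thm:akin} applies.

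For an initial condition $\mathbf{x_0}\in(0,1]^n$ the trajectory stays in the interior $\{x_i>0\}$ (the walls $x_i=0$ are invariant), so it lies in the orthant $\mathbb{R}_+^n$ that is the image of $\hat S_{n+1}$, and the diffeomorphism pulls it back to an interior replicator orbit $\mathbf{y}(t)$, which converges to some $\mathbf{y}^{*}\in S_{n+1}$ by Theorem~\ref{thm:akin}. To transfer this back I would use the explicit form of the map, $x_i = y_i/y_{n+1}$: by Lemma~\ref{lem:escapelemma} we have $x_i(t)\le 1$, hence $y_i(t)\le y_{n+1}(t)$ for all $t$, so the limit satisfies $y^{*}_i\le y^{*}_{n+1}$. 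Were $y^{*}_{n+1}=0$ this would force $\mathbf{y}^{*}=0$, contradicting $\sum_i y^{*}_i=1$; therefore $y^{*}_{n+1}>0$, the inverse map is continuous at $\mathbf{y}^{*}$, and $\mathbf{x}(t)\to \big(y^{*}_i/y^{*}_{n+1}\big)_{i}$, a limit depending on $\mathbf{x_0}$ through $\mathbf{y}^{*}$. Initial conditions with some $x_i(0)=0$ lie on an invariant face on which~\eqref{eq:sys} reduces to the LV system of the induced subgraph, so these are dispatched by induction on $n$, with the logistic equation as the base case.

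The main obstacle I anticipate is the symmetrisation step together with the boundary transfer. One must verify that the replicator realisation of~\eqref{eq:sys} can genuinely be taken symmetric --- this is where the undirectedness of the graph is used in an essential way --- and one must rule out the degeneration $y_{n+1}\to 0$ of the map near the boundary, since extinction ($x^{*}_i=0$) is exactly the phenomenon we expect. Both are handled by the gauge-invariance computation and the a priori bound $\mathbf{x}(t)\in[0,1]^n$ from Lemma~\ref{lem:escapelemma}, respectively; the remaining assertion that the limit is a genuine equilibrium follows because a convergent trajectory of a smooth flow must accumulate at a stationary point.
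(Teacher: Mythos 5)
Your proposal is correct and follows essentially the same route as the paper: reduce system~\eqref{eq:sys} to a replicator system via Theorem~\ref{thm:hofbauer} and invoke the convergence result of Theorem~\ref{thm:akin}. In fact your version is more careful than the paper's two-line proof, since you explicitly verify the two points the paper glosses over --- that the replicator matrix can be gauged to a symmetric one (using the column-additive invariance and the symmetry of $A$), and that the limit cannot degenerate to the face $y_{n+1}=0$ of the simplex, so the inverse map remains continuous at $\mathbf{y}^{*}$.
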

\begin{proof}
    Using the mapping from Theorem~\ref{thm:hofbauer}, the replicator system can be mapped to system~\eqref{eq:sys}.
    Then, Theorem~\ref{thm:akin} claims that all trajectories converge to an equilibrium point.
\end{proof}

\begin{lemma}\label{lem:cornerlemma}
    Let $\mathbf{x}(t)$ be a trajectory of system~\eqref{eq:sys} with initial condition $\mathbf{x}(0)=\mathbf{x_0}$. For some Lebesgue measure zero set $\Omega$ and all $\mathbf{x_0} \in (0,1)^n \setminus \Omega$, the limit $\mathbf{x}^*=\lim\limits_{t\to\infty} \mathbf{x}(t)$ is a binary vector.
\end{lemma}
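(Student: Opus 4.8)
The plan is to classify the equilibria of system~\eqref{eq:sys} according to their support, compute the Jacobian there, and show that any equilibrium possessing a coordinate strictly inside $(0,1)$ is linearly unstable as soon as $\tau>1$. The exceptional set $\Omega$ is then assembled from the stable sets of these unstable equilibria, which are too thin to carry positive measure.

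First I would write, for an equilibrium $\mathbf{x}^*$, its support $S:=\{i : x^*_i>0\}$. Since $\dot x_i = x_i\,(1-(M\mathbf{x})_i)$, every equilibrium satisfies $(M\mathbf{x}^*)_i=1$ for $i\in S$ and $x^*_i=0$ otherwise. Differentiating gives $J_{ij}=\delta_{ij}\,(1-(M\mathbf{x}^*)_i)-x^*_i M_{ij}$; at the equilibrium the diagonal term vanishes on $S$ while the coupling term vanishes off $S$, so, ordering $S$ first, $J$ is block upper-triangular with diagonal blocks $J_{SS}=-D_S M_{SS}$ and $D_{\bar S}=\diag(1-(M\mathbf{x}^*)_i)_{i\notin S}$, where $D_S=\diag(x^*_i)_{i\in S}$ is positive definite and $M_{SS}$, $A_{SS}$ denote principal submatrices of $M=\tau A+\mathbb{I}$ and $A$. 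The spectrum of $J$ is the union of the spectra of these blocks.

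The heart of the argument is that a non-binary equilibrium is unstable. If some $x^*_i\in(0,1)$, then $i\in S$ and the identity $(M\mathbf{x}^*)_i=x^*_i+\tau\sum_{j\sim i}x^*_j=1$ forces $i$ to have a neighbour $j$ with $x^*_j>0$; hence $j\in S$ and the induced subgraph on $S$ contains the edge $\{i,j\}$. The adjacency matrix of a single edge has smallest eigenvalue $-1$, so Cauchy interlacing yields $\lambda_{\min}(A_{SS})\le -1$, whence $\lambda_{\min}(M_{SS})=\tau\,\lambda_{\min}(A_{SS})+1\le 1-\tau<0$, precisely where the hypothesis $\tau>1$ enters. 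Because $D_S\succ 0$, the matrix $J_{SS}=-D_S M_{SS}$ is similar to the symmetric $-D_S^{1/2}M_{SS}D_S^{1/2}$, whose inertia equals that of $-M_{SS}$ by Sylvester's law; since $M_{SS}$ has a negative eigenvalue, $J_{SS}$ has a strictly positive one, and so does $J$. Conversely a binary equilibrium has independent support, $A_{SS}=0$, $M_{SS}=\mathbb{I}$, and $x^*_i=1$ on $S$, in line with the claim. I would then convert this instability into the measure statement: by Lemma~\ref{lem:convergence} every trajectory from $(0,1)^n$ converges to a single equilibrium, which by invariance of the faces $x_i=0$ (Lemma~\ref{lem:escapelemma}) lies in $[0,1]^n$, so $(0,1)^n$ is partitioned by the basins. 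For each non-binary equilibrium $p$ the positive eigenvalue means, via the (center-)stable manifold theorem, that any forward orbit converging to $p$ lies in the center-stable manifold $W^{cs}(p)$, an immersed submanifold of dimension at most $n-1$ and hence Lebesgue-null; taking $\Omega$ to be the union of these manifolds gives the result.

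The hard part will be making this final union genuinely measure zero. A finite union of $(n-1)$-dimensional manifolds is null, but a priori the non-binary equilibria could form a positive-dimensional family whenever some $M_{SS}$ is singular (that is, $-1/\tau\in\mathrm{spec}(A_{SS})$), and a continuum of center-stable manifolds need not be null. I would dispose of this by first observing that for all but finitely many $\tau$ each support yields at most one equilibrium (as $M_{SS}$ is then invertible), bounding the number of equilibria by $2^n$ and rendering the union finite; the remaining non-generic values of $\tau$ I would treat by a perturbation argument, or by exploiting the gradient-like Lyapunov structure inherited from the replicator correspondence of Theorems~\ref{thm:hofbauer}--\ref{thm:akin}, under which convergence to saddle-type equilibria is confined to a nowhere-dense null set.
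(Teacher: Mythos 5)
Your proposal is correct and follows essentially the same route as the paper: linearize at a non-binary equilibrium, symmetrize $-\diag(\mathbf{x^*})M$ to exhibit a positive eigenvalue coming from an edge inside the support (the paper uses the $2\times 2$ principal-minor test where you use Cauchy interlacing, to the same effect), and collect the thin (center-)stable manifolds of these saddles into $\Omega$. The one place you go beyond the paper is the closing concern about a possible continuum of non-binary equilibria when some $M_{SS}$ is singular; the paper's own proof simply takes the union over $\mathcal{F}$ without addressing this, so the flag is legitimate, and your sketched remedies are no less complete than the published argument.
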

\begin{proof}
    Let $\mathcal{S}$ represent the set of all stationary points of system~\eqref{eq:sys}.
    Let $\mathbf{x^{*}} \in \mathcal{S}$ be such that $0 < x^{*}_i < 1$ for all components of the vector.
    We first show that the Jacobian evaluated at $\mathbf{x^*}$ has at least one positive eigenvalue, thereby making it a saddle node or an unstable point.
    The Jacobian of the system at this point is given by:
    \begin{align}
        J(\mathbf{x^*}) = \bigg[I - \text{diag}(\mathbf{x}) M - \text{diag}(M\mathbf{x}) \bigg]\bigg|_{\mathbf{x}=\mathbf{x^*}}=-\diag(\mathbf{x^*})M.
    \end{align}
    Since $\mathbf{x^{*}} > 0$, we know that the spectrum of $J(\mathbf{x^{*}})$ is equivalent to the spectrum of the symmetric and real matrix
    \begin{align}
        C(\mathbf{x^{*}}) &= -\diag(\mathbf{x^{*}})^{1/2} M \diag(\mathbf{x^{*}})^{1/2}.
    \end{align}
    Hence, all eigenvalues of $J(\mathbf{x^{*}})$ are real. 
    Moreover, point $\mathbf{x^*}$ is a saddle node if $C(\mathbf{x^*})$ contains at least one positive eigenvalue in its spectrum, which, in turn, is equivalent to $C(\mathbf{x^*})$ having a positive eigenvalue.
    The latter can be tested using the criterion: A symmetric matrix is not negative semi-definite if at least one of its principal minors of even order is negative (or at least one  of its principal minors of odd order is positive), see for example ~\cite{prussing1986principal}.
    \\ \\
    Consider a pair $k, l \in [n]$ with $A_{kl} = 1$. Then, 
    \begin{align}
        \begin{aligned}
            C(x^{*})_{kk} &= -x^{*}_k, \\
            C(x^{*})_{ll} &= -x^{*}_l, \\
            C(x^{*})_{kl} &= -\tau {x^{*}_k}^{1/2} {x^{*}_l}^{1/2} A_{kl}, \\
            C(x^{*})_{lk} &= -\tau {x^{*}_l}^{1/2} {x^{*}_k}^{1/2} A_{lk}.
        \end{aligned}
    \end{align}
    Since $\tau > 1$, the determinant of the $2 \times 2$ leading principal minor of $C$ is given by
    \begin{align}
        C_{kk} C_{ll} - C_{kl} C_{lk} &= x^{*}_k x^{*}_l - \tau^2 x^{*}_k x^{*}_l A_{kl} A_{lk}<0, \label{eq:stabilitysylvesterscriterion}
    \end{align}
    and therefore $C(\mathbf{x^*})$ is not negative semi-definite. Hence, the steady state solution $\mathbf{x^*}$ has at least one positive eigenvalue.
    
    Moreover, note that because the Jacobian at $\mathbf{x^*}$ has eigenvalue zero only when the determinant of $M$ is equal to zero, the Center Manifold Theorem~\cite{perko_manifold} states that the attracting manifold of $\mathbf{x}^*$ always has dimension strictly smaller than $n$.
    We conclude that $\mathbf{x^*}$ attracts trajectories from a manifold $\mathcal{T}_{\mathbf{x^*}}$ of dimension smaller than $n$, \emph{i.e.} a measure-zero set in $[0,1]^n$.
    \\ \\
    By relaxing the lower bound in the assumptions on the stationary point, suppose $0 \leq x^{*}_i < 1$ holds for all $i \in [n]$, with at least one component being strictly positive. We will show that, as before, $\mathbf{x^*}$ cannot be stable. Let $I~=~\{ i \in [n] \ : \ x^{*}_i = 0 \}$ be the set of indices corresponding to zero components. The Jacobian of the Lotka-Volterra system can be rewritten as
    \begin{align}
        J &=
        \begin{pmatrix}
        J' & [-\tau x_i A_{ij}]_{i \notin I, j \in I}\\
        0 & \diag(1 - \tau \sum\limits_{k=1}^n A_{ik} x_k )
        \end{pmatrix}.\label{eq:jacobianequation}
    \end{align}
    The matrix $J'$ denotes the Jacobian of the dynamical system with non-vanishing variables $x_i$, {\it i.e.} with $i \in [n]\setminus I$. Note that $J'$ fulfils the assumptions of the first part of the proof and hence has a positive eigenvalue which is also inherited by $J$.
    \\ \\
    Finally, let us also relax the upper bound. Suppose, $0\leq x^{*}_i \leq 1$ holds for all $i \in [n]$, with at least one component being strictly in $(0,1)$. Let $Q~=~\{ i \in [n] \ : \ x^{*}_i = 1 \}$.
    It follows from the definition of the stationary point that $\frac{{\rm d} x_i}{{\rm d} t}=x^*_i\left(1 - (M\mathbf{x^*})_i\right)=0$, which implies that $(A\mathbf{x^*})_i=0$ for $i\in Q$.
    Using the latter equality for each $i\in Q$, reveals a block diagonal structure of the Jacobian,
    $(J')_{i,i}=-1$ and  $(J')_{i,j}=0$ for $j\neq i$. That is, $|Q|$ blocks of size 1, and one block of size $n-|Q|-|I|$.
    Therefore, without loss of generality, we can exclude all vertices $i \in Q \cup I$ from the system. Let $\mathcal{F}$ denote the set consisting of all the stationary points of system~\eqref{eq:sys} that are not binary. Then the first part of the proof yields the claim with $\Omega = \mathcal{S} \cup \left( \bigcup\limits_{\mathbf{x^*} \in \mathcal{F} } \mathcal{T}_{\mathbf{x^*}} \right)$.
\end{proof}

\begin{lemma}\label{lem:maximal}
    For $\tau>1$ almost all trajectories of system~\eqref{eq:sys} converge to a set indicator that is independent and maximal.
\end{lemma}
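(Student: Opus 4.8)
The plan is to reduce the statement to a structural fact about \emph{binary} stationary points and then read off independence and maximality from the stationarity and stability conditions, respectively. First I would invoke Lemmas~\ref{lem:escapelemma},~\ref{lem:convergence}, and~\ref{lem:cornerlemma} to conclude that, for all initial conditions in $(0,1)^n$ outside a Lebesgue-null set, the trajectory converges to a single equilibrium $\mathbf{x}^{*}$ which is binary. It then suffices to show that the support $S = \{ v_i : x^{*}_i = 1 \}$ of such a binary equilibrium is a maximal independent set, after discarding at most a further measure-zero set of initial conditions.

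For independence, I would use only the stationarity condition $x^{*}_i\bigl(1 - (M\mathbf{x}^{*})_i\bigr) = 0$. Whenever $x^{*}_i = 1$ this forces $(M\mathbf{x}^{*})_i = 1$, and since $M = \tau A + \mathbb{I}$ with $x^{*}_i = 1$, it reduces to $\tau (A\mathbf{x}^{*})_i = 0$, hence $(A\mathbf{x}^{*})_i = 0$. Because $(A\mathbf{x}^{*})_i$ counts the neighbours of $v_i$ lying in $S$, every vertex of $S$ has no neighbour in $S$, so $S$ is independent. This step is automatic and needs no genericity assumption.

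The substance of the argument is maximality, which I would establish through a stability analysis, mirroring Lemma~\ref{lem:cornerlemma}. Suppose $S$ were not maximal: then there is an index $j$ with $x^{*}_j = 0$ and $(A\mathbf{x}^{*})_j = 0$, i.e.\ a vertex outside $S$ with no neighbour in $S$, which could be added while preserving independence. Evaluating the Jacobian $J = \diag(\mathbf{1} - M\mathbf{x}) - \diag(\mathbf{x})M$ at $\mathbf{x}^{*}$, the $\diag(\mathbf{x}^{*})M$ term contributes nothing to row $j$ since $x^{*}_j = 0$, while $(M\mathbf{x}^{*})_j = \tau(A\mathbf{x}^{*})_j + x^{*}_j = 0$; hence row $j$ equals $e_j^{T}$, so $e_j$ is a left eigenvector of $J(\mathbf{x}^{*})$ with eigenvalue $1 > 0$. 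A positive eigenvalue makes $\mathbf{x}^{*}$ a saddle or unstable point, so exactly as in Lemma~\ref{lem:cornerlemma} its attracting manifold has dimension strictly less than $n$ and thus attracts only a null set of initial conditions.

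Finally I would assemble the exceptional sets: the null set $\Omega$ of Lemma~\ref{lem:cornerlemma} (where the limit is non-binary) together with the stable manifolds of the finitely many binary-but-non-maximal equilibria. Since there are at most $2^n$ binary stationary points, this union is still null, and outside it every trajectory converges to a binary equilibrium whose support is both independent and maximal. I expect the main obstacle to be step three: pinpointing $e_j$ as the destabilising direction and verifying that it yields a genuinely positive eigenvalue; once that eigenvalue is exhibited, the measure-zero conclusion piggybacks directly on the stable-manifold machinery already set up. A minor point to keep careful is the finiteness of the set of binary equilibria, which guarantees that the union of their stable manifolds remains negligible.
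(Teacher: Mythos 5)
Your proposal is correct and follows essentially the same route as the paper: independence is read off from the stationarity condition $x^*_i(1-(M\mathbf{x}^*)_i)=0$, and maximality is obtained by exhibiting the eigenvalue $1-\tau\sum_k A_{ik}x^*_k = 1$ at any binary equilibrium whose support is non-maximal (you extract it from row $j$ of the Jacobian, the paper from the lower-right block of Eq.~\eqref{eq:jacobianequation}, which is the same computation), so that its stable manifold is a null set. Your explicit remark that there are only finitely many binary equilibria, so the union of exceptional sets stays null, is a small tidiness improvement over the paper's wording but not a different argument.
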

\begin{proof}
    From Lemma~\ref{lem:cornerlemma} we know that $\mathbf{x^{*}}$ is binary almost always and hence it identifies a set indicator, $\Gamma := \{ v_i : {{\mathbf x}^{*}_i} = 1 \} \subseteq V$. Suppose $x^*_i = 1$, then $x^*_i \left(1 - (M{{\mathbf x}^*})_i \right) = 0$ implies that $(A \mathbf{x^*})_i = 0$. That is $x^*_j = 0$ for all adjacent $j \sim i$, which shows that $\Gamma$ is indeed an independent set in the graph.
    \\ \\
    Suppose that $\Gamma$ is not maximal. Then, there exists some $i \in [n]$ with $x^*_i = 0$ such that $x^*_j = 0$ for all $j \sim i$. Since $x^*_i = 0$, and $x^*_j = 0$ for all $j \sim i$, they must be contained in the lower right block of matrix $J$, defined in Equation~\eqref{eq:jacobianequation}, and therefore securing an eigenvalue of 
    \begin{align}
        \lambda = 1 - \tau \sum\limits_{k=1}^n A_{ik} x^*_k = 1
    \end{align}
    in the spectrum. Hence, the stable manifold of $\mathbf{x^{*}}$ has dimension smaller than $n$ by the Stable Manifold Theorem, which means it is a measure zero set in $[0,1]^n$. We conclude that $\Gamma$ must be maximal almost always.
\end{proof}

\begin{lemma}\label{lem:mis_stable}
    For $\tau>1$, the indicator of any maximal independent set has a basin of attraction of non-zero measure.
\end{lemma}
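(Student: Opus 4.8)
The plan is to show that for any maximal independent set $S$, its indicator $\mathbf{x}^* = \mathbf{1}_S$ is a \emph{hyperbolic, locally asymptotically stable} equilibrium of system~\eqref{eq:sys}. Once this is established, standard ODE theory (the Hartman--Grobman theorem, or simply the fact that a linearization with all eigenvalues in the open left half-plane yields attraction on a full neighbourhood) guarantees that $\mathbf{x}^*$ possesses an open neighbourhood contained in its basin of attraction, and any open set has positive Lebesgue measure. So the whole lemma reduces to a linear-stability computation at $\mathbf{x}^*$.

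First I would verify that $\mathbf{1}_S$ is indeed a stationary point. For $i \notin S$ this is immediate since $x_i^* = 0$ annihilates the right-hand side, while for $i \in S$ independence gives $(A\mathbf{x}^*)_i = 0$, hence $(M\mathbf{x}^*)_i = x_i^* = 1$ and the factor $1 - (M\mathbf{x}^*)_i$ vanishes. Next I would evaluate the Jacobian $J = \mathbb{I} - \diag(M\mathbf{x}) - \diag(\mathbf{x})M$ from Lemma~\ref{lem:cornerlemma} at $\mathbf{x}^* = \mathbf{1}_S$. Ordering vertices so that $S$ precedes its complement $\bar S = V \setminus S$, a direct computation shows that $J$ is block upper-triangular:
\begin{align*}
    J(\mathbf{1}_S) = \begin{pmatrix} -\mathbb{I}_{|S|} & -\tau A_{S\bar S} \\ 0 & \diag\left(1 - \tau d_i^S\right)_{i \in \bar S} \end{pmatrix},
\end{align*}
where $d_i^S := |\{ j \sim i : j \in S \}|$ counts the neighbours of $i$ lying in $S$. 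The diagonal $S$-block equals $-\mathbb{I}_{|S|}$ precisely because independence forces $A_{ij} = 0$ for $i,j \in S$, and the lower-left block vanishes because every row $i \notin S$ carries the factor $x_i^* = 0$.

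The eigenvalues of a block-triangular matrix are those of its diagonal blocks, so the spectrum of $J(\mathbf{1}_S)$ consists of $-1$ with multiplicity $|S|$ together with the numbers $1 - \tau d_i^S$ for $i \in \bar S$. Here is where \emph{maximality} enters, and this is the crux of the argument: if $S$ is maximal then no vertex outside $S$ can be added while preserving independence, so every $i \notin S$ has at least one neighbour in $S$, i.e.\ $d_i^S \geq 1$. Combined with $\tau > 1$ this yields $1 - \tau d_i^S \leq 1 - \tau < 0$ for all $i \in \bar S$. Hence every eigenvalue of $J(\mathbf{1}_S)$ is strictly negative, $\mathbf{x}^*$ is asymptotically stable, and the positive-measure basin follows as above.

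I expect no genuine obstacle: the only conceptual point is recognizing that the two defining properties of a maximal independent set map cleanly onto the two diagonal blocks — independence making the $S$-block $-\mathbb{I}$, and maximality making every diagonal entry of the $\bar S$-block negative for $\tau>1$. The remaining work is the routine verification of the block structure and an appeal to linear-stability theory; the one place to be slightly careful is to confirm that $J$ has no zero or purely imaginary eigenvalues (it does not, since every diagonal block is manifestly nonsingular with negative spectrum), so that hyperbolicity and a genuine open basin are guaranteed rather than mere neutral stability.
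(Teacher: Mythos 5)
Your proof is correct and follows essentially the same route as the paper: both evaluate the Jacobian at the indicator, exploit the block-triangular structure (with $-\mathbb{I}$ on the independent-set block and $1-\tau d_i^S<0$ on the complement block by maximality and $\tau>1$), and conclude a positive-measure basin from the strictly negative spectrum. Your version is slightly more explicit about verifying stationarity and hyperbolicity, but the substance is identical.
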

\begin{proof}
    Let $\Gamma := \{ v_i : {{\mathbf x}^{*}_i} = 1 \} \subseteq V$ be a maximal independent set. For all $i \in [n]$ with $x^*_i = 0$, there exists $j$ such that $v_i \sim v_j$ and $x^*_j = 1$. Thus, the diagonal elements of the lower right block matrix are negative, and the block structure of the Jacobian in Equation~\eqref{eq:jacobianequation} results in a diagonal matrix with $-1$ on the diagonal in the upper left block. Hence, the Jacobian evaluated for the indicator corresponding to $\Gamma$ has only negative eigenvalues. By the Center Manifold Theorem, this implies that $x^*$ has a stable manifold of non-zero measure.
\end{proof}

\subsection{Analytically solvable graphs}
The maximum eigenvalue of a regular graph is given by the constant degree $d$, corresponding to the eigenvector $\mathbf{1}$. Therefore, for every $\mathbf{x^*}$ satisfying $M\mathbf{x^*} = \mathbf{1}$:
\begin{align}
    \mathbf{x^*} = (\tau d + 1)^{-1} \mathbf{1}.
\end{align}

\noindent Hence, for a complete graph, the interior fixed point is expressed as $x^*_i = (\tau(n-1) + 1)^{-1}$, giving the Jacobian:
\begin{align}
    J(\mathbf{x^*}) &=
    \begin{pmatrix}
    - x^*_i & - \tau x^*_i & \dots & - \tau x^*_i\\
    - \tau x^*_i & - x^*_i & & \vdots\\
    \vdots & & \ddots & \vdots\\
    - \tau x^*_i & \dots & \dots & -x^*_i
    \end{pmatrix}.
\end{align}

\noindent The determinant of $J(\mathbf{x^*})$ is given by:
\begin{align}
    \det(J(\mathbf{x^*})) &= \left( \frac{ \tau-1 }{ \tau(n-1) + 1 } \right)^n \frac{( 1 + \tau (n-1) )}{( 1 - \tau )}.
\end{align}

\noindent Therefore, using the same criterion as in the proof of Lemma~\ref{lem:cornerlemma}, $\mathbf{x^*}$ is always stable for $\tau < 1$. In this case $\tau = 1$ forms a strict bound for the LV algorithm to output a maximal independent set.

\subsection{Implementation of the CLV algorithm}
The CLV algorithm involves the simulation of trajectories for every step, rendering it computationally expensive. To mitigate this issue, we can use a theorem introduced by B.S. Goh~\cite[p.138]{GohBS}. This theorem states that if a nontrivial equilibrium $\mathbf{x^*}$ satisfying $M\mathbf{x^*} = \mathbf{1}$ of the Lotka-Volterra equations is feasible, \textit{i.e.,} $x_i>0$ for all $i \in [n]$, and a constant positive diagonal matrix $C$ exists such that $C M + M^T C$ is negative definite, then the Lotka-Volterra model is globally stable in the feasible region. This implies that in the absence of bifurcation, the interior fixed point $\mathbf{x^*}$ is globally stable. As a result, the CLV algorithm can be modified to exclude domains where no bifurcation has occurred, leading to a substantial reduction in computational cost.

\algrenewcommand\algorithmicrequire{\textbf{Input:}}
\algrenewcommand\algorithmicensure{\textbf{Output:}}

\begin{algorithm}[H]
    \caption*{{\bf CLV algorithm}}
    \begin{algorithmic}[1]
        \Require{Graph $G$ with vertex set $V = (v_i)_{i \in [n] }$, edge set $E$, and adjacency matrix $A$.}
        \Ensure{Maximal independent set of $G$.}
        
        \State $\mathbf{x_{\text{end}}} \leftarrow (1,1, \dots, 1)$.
        
        \While {$|E| > 0$}

            \State $\tau^{*} \leftarrow \inf\{ \tau > 0 \ : \ \lambda_{\text{max}}(J(\mathbf{x^*(\tau)})) = 0 \}$.

            \State $x_{\text{end}} \leftarrow \lim\limits_{t\to\infty}\mathbf{x}(t)$, with $M = \tau^{*} A + \mathbb{I}$ and initial condition $\mathbf{x_{\text{end}}}$.
   
            \State Remove all vertices $v_i$ where $( \mathbf{x_{\text{end}}} )_i = 0$ from $G$.

        \EndWhile

        \State \textbf{return} All vertices $v_i$ such that $(\mathbf{x_{\text{end}}})_i = 0$.
    \end{algorithmic}
    \label{algo:algorithmdisplacement}
\end{algorithm}

\noindent To find the critical values $\tau^{*}$, where these bifurcations occur, we have implemented Newton's method on the functional given by the product of the equilibrium point coordinates, which gives the iterative scheme:
\begin{align}
    \tau_{n+1} = \tau_n - \frac{ \prod\limits_{i \in [n]} x^{*}_{i} }{ \sum\limits_{j \in [n]} (- M^{-1} A M^{-1}\mathbf{1})_{j} \prod\limits_{i \neq j} x^{*}_i }.
\end{align}

\noindent The implemented CLV algorithm can potentially remove too much vertices. This is easily solved by rerunning the system on the graph with all the selected vertices and its neighbours removed.

\subsection{CLV on complete bipartite graphs}\label{app:completebipartitegraph}
Let $G = (V,E)$ be a bipartite graph with $V = X \cup Y$ and $X \cap Y = \emptyset$. We define a complete bipartite graph by setting $d(v) = |Y|$ for every $v \in X$ and $d(v) = |X|$ for every $v \in Y$. We can write the matrix $M$ as 
\begin{align}
    M = 
    -\begin{pmatrix}
        \mathbb{I} & \tau \mathcal{I}_{|X|\times|Y|} \\
        \tau \mathcal{I}_{|Y|\times |X|} & \mathbb{I}
    \end{pmatrix}, \label{eq:completebipartitegraphmatrix}
\end{align}
where $\mathbb{I}$ is the identity matrix, and $\mathcal{I}_{n\times m}$ is a matrix containing only ones of size $n$ times $m$. Using recurrence relationships, the determinant of general $M$ is given by $\det(M) = 1 - |X| |Y| \tau^2$ and therefore the $\tau$ value for which a pitchfork bifurcation occurs is given by $\tau^{*}_{\text{pitch}} = ( |X| |Y| )^{-1/2}$.
\\ \\
For $\tau < ( |X| |Y| )^{-1/2}$, the matrix $M$ is nonsingular, and for $v_k \in X$, we an express $x_k^{*}$ as:
\begin{align}
    x^{*}_k := (M^{-1}\mathbf{1})_k = 1 + \sum_{i=1}^{\infty} (-\tau)^{i} \left( A^{i} \mathbf{1} \right)_k
    = 1 + \sum_{i=1}^{\infty} \tau^{2i} |X|^{i} |Y|^{i} \bigg[ 1 - |X|^{-1} \tau^{-1} \bigg]
    = \frac{ 1 - \tau |Y| }{ 1 - \tau^2 |X| |Y| }.
\end{align}

\noindent The same calculation can be done for $v_k \in Y$, implying $\tau^{*}_{\text{trans}} = \min\{ |X|^{-1}, |Y|^{-1} \}$. Note that $\tau^{*}_{\text{pitch}} \geq \tau^{*}_{\text{trans}}$, and the CLV algorithm is exact.

\subsection{CLV on path graphs}\label{app:odd_sized_linegraph}
The matrix $M = \tau A + \mathbb{I}$ is diagonally dominant for $\tau < 1/2$ since the degree of each vertex is less than or equal to two. Therefore, a pitchfork bifurcation cannot occur before $\tau = 1/2$.
\\ \\
\noindent{\bf Odd size.} For odd-sized path graphs, a transcritical bifurcation takes place at $\tau_{\text{trans}} = 1/2$. Let $x^*_i = 1$ for odd $i$ and $x^*_i = 0$ for even $i$, leading to the equations:
\begin{align}
    \begin{cases}
        x^*_1 + \tau x^*_2 &= 1 \\
        x^*_2 + \tau x^*_1 + \tau x^*_3 &= 1 \\
        \hspace{1cm}\vdots&\hspace{0.1cm}\vdots \\
        x^*_n + \tau x^*_{n-1}&= 1 \\
    \end{cases}
\end{align}
The last equation is satisfied because the size of the graph is odd. Hence, a transcritical bifurcation occurs at $\tau = 1/2$.
\\ \\
\noindent{\bf Even size.} 
We show that a transcritical bifurcation occurs at:
\begin{align}
    \tau^{*} = \left( 2 \cos\left( \frac{\pi}{n-1} \right) \right)^{-1}.
\end{align}

\noindent We use the expression by Hu,~G. and O'Connell,~R.F.~\cite{hu1996analytical}:
\begin{align}
 \begin{split}
 	x^{*}_i &= 
    C^{-1} \sum_{j=1}^{n+1} (-1)^{i+j} \bigg[ \cos( \lambda(n+1 - |j-i|) ) - \cos( \lambda(n+1-i-j) ) \bigg],   \\
    C &= ( -2 \tau \sin(\lambda) \sin((n+1)\lambda) ), \\
	\lambda &= \arccos( (2\tau)^{-1} ).
\end{split}\label{app:eq:taubiggerthan}
\end{align}

\noindent Evaluating $x^{*}_1$ (Equation~\ref{app:eq:taubiggerthan}) at the point $\tau^{*}$ renders $x^{*}_1 = 1$. Moreover, note that:
\begin{align}
    [ x^{*}_i - x^{*}_{i+2} ] \cdot C^{-1} \cdot (-1)^{i} &= \sum_{j}^{n+1} \bigg[ \cos([n+1-|j-i|]) - \cos([ n+1-|j-i-2| ])\bigg] \notag\\
    &- \sum_{j}^{n+1} \bigg[ \cos([n+1-(j-i)]) - \cos([ n+1-(j-i-2) ]) \bigg] \\
     &= (-1)^{i} \bigg[ \cos( [n+1-(n-i)] \lambda ) - \cos( [n+1-(i)] \lambda ) \bigg] \notag\\
     &+ (-1)^{i+1} \bigg[ \cos( [n+1-(n-i+1)] \lambda ) - \cos( [n+1-(i+1)] \lambda ) \bigg] \notag\\
     &+ (-1)^{i} \bigg[ \cos( [n+1-(n+i+2)] \lambda ) - \cos( [n+1-(i+2)] \lambda ) \bigg] \notag\\
     &+ (-1)^{i+1} \bigg[ \cos( [n+1-(n+1+i+2)] \lambda ) - \cos( [n+1-(i+1)] \lambda ) \bigg] \\
    &= (-1)^{i} \cdot (-2) \sin\left(\frac{\lambda}{2}\right) \bigg[ \sin([i+1/2]\lambda) - \sin([i+1/2+1]\lambda) \notag\\
    &+ \sin([n-i-1/2]\lambda) - \sin([n-i-1/2+1]\lambda) \bigg] \\
    &= (-1)^{i} \cdot 2 \sin\left(\frac{\lambda}{2}\right) \cdot 2 \sin\left(\frac{\lambda}{2}\right) \bigg[ \cos\left( \bigg[\frac{2i+2}{2}\bigg] \lambda \right) + \cos\left( \bigg[ \frac{2n-2i}{2}\lambda \bigg] \right) \bigg]\\
    &= (-1)^i \cdot 8 \sin\left(\frac{\lambda}{2}\right)^2 \cos\left(\frac{n+1}{2}\lambda\right) \cos\left(\frac{ n-2i-1 }{ 2 }\lambda\right).
\end{align}

\noindent Substituting $C$ and rewriting we get:
\begin{align}
    x^{*}_{i} - x^{*}_{i+2} &= \frac{1}{\tau} (-1)^{i+1} \tan\left( \frac{\lambda}{2} \right) \cos\left( \frac{ n-2i-1 }{ 2 } \lambda \right) \sin\left( \frac{n+1}{2} \lambda \right)^{-1}.
\end{align}

\noindent On the domain $[0, \tau^{*}]$, the variables $x^*_2$ and $x^*_{n-2}$ are equal and have the lowest values and therefore a transcritical bifurcation occurs at $\tau^{*}~=~\left( 2 \cos\left( \frac{\pi}{n-1} \right) \right)^{-1}$. The determinant of the interaction matrix $M$ with $\tau > 1/2$, obeys the recurrence relation $f(n) = f(n-1) - \tau^2 f(n-2)$, which, together with initial conditions $f(1) = 1$, $f(2) = 1 - \tau^2$, is solved by:
\begin{align}
    f(n) &= 2^{-n} \sum_{l=0}^{\infty} \left( 4 \tau^2 -1 \right)^{l} (-1)^{l} \binom{n+1}{2l+1}.\label{linegraph_recursion}
\end{align}

\noindent Equation~\eqref{linegraph_recursion} at $\tau^{*}$ evaluates to $-2^{1-n} \cos\left( \frac{\pi}{n-1} \right)^{1-n} < 0$ and therefore a pitchfork bifurcation will occur.

\section{Data availability}
All data that support the plots within this paper and other findings of this study are available at \url{https://figshare.com/s/f26a4e23eb2ab892c9dc}.

\section{Code availability}
Code is available for this paper at\\ \url{https://github.com/NiekMooij/Finding-large-independent-sets-in-networks-using-competitive-dynamics}.

% \bibliographystyle{naturemag}
% \bibliography{bibfile}

\section{Acknowledgements}
The authors are grateful to Yves van Gennip for critical feedback and suggestions. 
NM gratefully acknowledges support from Complex Systems Fund, with special thanks to Peter Koeze. IK gratefully acknowledges support form Netherlands Research Organisation (NWO), research program VIDI, project number VI.Vidi.213.108. 

\section{Author contributions}
IK designed research, NM performed the study, ran simulations and made the figures. NM and IK wrote the manuscript.

\section{Competing interests}
None declared.

\section{Materials \& Correspondence}
Please contact NM for correspondence.

\section{Tables}
\begin{table}[] 
\centering
\begin{tabular}{ccccc|cccccc}
name           & size & $\rho$ & $d_{\text{min}}$ & $d_{\text{max}}$ & LV          & CLV         & MDG      & RPP         & Luby        & Blelloch    \\ \hline
C125-9         & 125  & 0.9    & 102              & 119              & \textbf{4}  & \textbf{4}  & \textbf{4}  & 3           & 3           & 3           \\
C250-9         & 250  & 0.9    & 203              & 236              & 4           & \textbf{5}  & 4           & 3           & 4           & 4           \\
C500-9         & 500  & 0.9    & 431              & 468              & \textbf{4}  & \textbf{4}  & \textbf{4}  & \textbf{4}  & \textbf{4}  & \textbf{4}  \\
MANN-a27       & 378  & 0.99   & 364              & 374              & \textbf{4}  & \textbf{4}  & 2           & 3           & 3           & 2           \\
MANN-a9        & 45   & 0.93   & 40               & 41               & \textbf{3}  & \textbf{3}  & 2           & \textbf{3}  & \textbf{3}  & 2           \\
brock200-1     & 200  & 0.75   & 130              & 165              & 5           & \textbf{6}  & \textbf{6}  & 5           & 5           & 5           \\
brock200-2     & 200  & 0.5    & 78               & 114              & \textbf{10} & 9           & \textbf{10} & 8           & 8           & 8           \\
brock200-3     & 200  & 0.61   & 99               & 134              & \textbf{8}  & \textbf{8}  & \textbf{8}  & 7           & \textbf{8}  & 7           \\
brock200-4     & 200  & 0.66   & 112              & 147              & \textbf{8}  & \textbf{8}  & \textbf{8}  & 6           & 6           & 7           \\
brock400-1     & 400  & 0.75   & 272              & 320              & 6           & \textbf{7}  & 6           & 6           & 5           & 6           \\
brock400-2     & 400  & 0.75   & 274              & 328              & 6           & 6           & \textbf{7}  & 6           & 5           & 6           \\
brock400-3     & 400  & 0.75   & 275              & 322              & \textbf{7}  & 6           & 6           & 6           & 6           & 6           \\
brock400-4     & 400  & 0.75   & 275              & 326              & \textbf{7}  & \textbf{7}  & 6           & 5           & 6           & 5           \\
c-fat200-1     & 200  & 0.08   & 14               & 17               & \textbf{18} & \textbf{18} & \textbf{18} & 17          & 17          & 17          \\
c-fat200-2     & 200  & 0.16   & 32               & 34               & \textbf{9}  & \textbf{9}  & \textbf{9}  & \textbf{9}  & 8           & \textbf{9}  \\
c-fat200-5     & 200  & 0.43   & 83               & 86               & \textbf{3}  & \textbf{3}  & \textbf{3}  & \textbf{3}  & \textbf{3}  & \textbf{3}  \\
c-fat500-1     & 500  & 0.04   & 17               & 20               & 37          & 39          & \textbf{40} & 36          & 36          & 37          \\
c-fat500-10    & 500  & 0.37   & 185              & 188              & \textbf{4}  & \textbf{4}  & \textbf{4}  & \textbf{4}  & \textbf{4}  & \textbf{4}  \\
c-fat500-2     & 500  & 0.07   & 35               & 38               & 19          & 19          & \textbf{20} & 18          & 18          & 19          \\
c-fat500-5     & 500  & 0.19   & 92               & 95               & \textbf{8}  & \textbf{8}  & \textbf{8}  & \textbf{8}  & 7           & \textbf{8}  \\
gen200-p0-9-44 & 200  & 0.9    & 165              & 190              & \textbf{4}  & \textbf{4}  & \textbf{4}  & \textbf{4}  & \textbf{4}  & 3           \\
gen200-p0-9-55 & 200  & 0.9    & 164              & 190              & \textbf{5}  & 4           & 4           & 4           & 3           & 4           \\
gen400-p0-9-55 & 400  & 0.9    & 334              & 375              & 7           & \textbf{8}  & \textbf{8}  & 4           & \textbf{8}  & 4           \\
gen400-p0-9-65 & 400  & 0.9    & 333              & 378              & 5           & 6           & 6           & 5           & \textbf{7}  & 6           \\
gen400-p0-9-75 & 400  & 0.9    & 335              & 380              & \textbf{5}  & \textbf{5}  & \textbf{5}  & \textbf{5}  & \textbf{5}  & 4           \\
hamming6-2     & 64   & 0.9    & 57               & 57               & \textbf{2}  & \textbf{2}  & \textbf{2}  & \textbf{2}  & \textbf{2}  & \textbf{2}  \\
hamming6-4     & 64   & 0.35   & 22               & 22               & \textbf{12} & 8           & \textbf{12} & \textbf{12} & \textbf{12} & \textbf{12} \\
hamming8-2     & 256  & 0.97   & 247              & 247              & \textbf{2}  & \textbf{2}  & \textbf{2}  & \textbf{2}  & \textbf{2}  & \textbf{2}  \\
hamming8-4     & 256  & 0.64   & 163              & 163              & \textbf{16} & \textbf{16} & \textbf{16} & \textbf{16} & \textbf{16} & \textbf{16} \\
johnson16-2-4  & 120  & 0.76   & 91               & 91               & \textbf{15} & \textbf{15} & \textbf{15} & \textbf{15} & \textbf{15} & \textbf{15} \\
johnson32-2-4  & 496  & 0.88   & 435              & 435              & \textbf{31} & \textbf{31} & \textbf{31} & \textbf{31} & \textbf{31} & \textbf{31} \\
johnson8-2-4   & 28   & 0.56   & 15               & 15               & \textbf{7}  & \textbf{7}  & \textbf{7}  & \textbf{7}  & \textbf{7}  & \textbf{7}  \\
johnson8-4-4   & 70   & 0.77   & 53               & 53               & \textbf{5}  & \textbf{5}  & \textbf{5}  & \textbf{5}  & \textbf{5}  & \textbf{5}  \\
keller4        & 171  & 0.65   & 102              & 124              & \textbf{15} & \textbf{15} & \textbf{15} & 14          & 14          & 12          \\
p-hat300-3     & 300  & 0.74   & 168              & 267              & \textbf{8}  & \textbf{8}  & \textbf{8}  & 7           & 7           & 6           \\
p-hat500-2     & 500  & 0.5    & 117              & 389              & \textbf{35} & \textbf{35} & \textbf{35} & 27          & 25          & 22          \\
p-hat500-3     & 500  & 0.75   & 293              & 452              & \textbf{10} & 8           & 9           & 7           & 7           & 6           \\
san200-0-7-1   & 200  & 0.7    & 125              & 155              & \textbf{8}  & \textbf{8}  & 7           & 7           & 7           & \textbf{8}  \\
san200-0-7-2   & 200  & 0.7    & 103              & 164              & \textbf{12} & 11          & \textbf{12} & 10          & 10          & 10          \\
san200-0-9-1   & 200  & 0.9    & 159              & 191              & \textbf{4}  & 3           & \textbf{4}  & 3           & \textbf{4}  & 3           \\
san200-0-9-2   & 200  & 0.9    & 169              & 188              & \textbf{4}  & 3           & \textbf{4}  & \textbf{4}  & \textbf{4}  & \textbf{4}  \\
san200-0-9-3   & 200  & 0.9    & 166              & 187              & \textbf{5}  & \textbf{5}  & \textbf{5}  & \textbf{5}  & \textbf{5}  & 4           \\
san400-0-5-1   & 400  & 0.5    & 174              & 225              & \textbf{32} & \textbf{32} & \textbf{32} & 27          & 25          & 27          \\
san400-0-7-1   & 400  & 0.7    & 257              & 301              & \textbf{11} & 10          & \textbf{11} & 10          & 9           & 10          \\
san400-0-7-2   & 400  & 0.7    & 257              & 304              & 14          & \textbf{15} & 13          & 12          & 11          & 10          \\
san400-0-7-3   & 400  & 0.7    & 250              & 307              & \textbf{19} & \textbf{19} & \textbf{19} & 13          & 15          & 12          \\
san400-0-9-1   & 400  & 0.9    & 341              & 374              & \textbf{5}  & 4           & \textbf{5}  & \textbf{5}  & 3           & 4           \\
sanr200-0-7    & 200  & 0.7    & 120              & 161              & \textbf{7}  & \textbf{7}  & 6           & \textbf{7}  & 6           & 6           \\
sanr200-0-9    & 200  & 0.9    & 166              & 189              & \textbf{4}  & \textbf{4}  & \textbf{4}  & \textbf{4}  & 3           & \textbf{4}  \\
sanr400-0-5    & 400  & 0.5    & 161              & 233              & \textbf{11} & \textbf{11} & \textbf{11} & 9           & 9           & 9           \\
sanr400-0-7    & 400  & 0.7    & 252              & 310              & \textbf{7}  & 6           & \textbf{7}  & \textbf{7}  & 6           & \textbf{7} 
\end{tabular}
\caption{Performance of the LV, CLV, MDG, RPP, Luby, and Blelloch's algorithm on DIMAC graphs. We define the LV, MDG, RPP, Luby, and Blelloch's output to be the largest set obtained in $10$ runs.}
\label{tab:DIMAC}
\end{table}

\end{document}